\let\doendproof\endproof
\renewcommand\endproof{~\hfill\qed\doendproof}
\newcommand{\bestsetcardinality}{135}
\newcommand{\bestsetcrossings}{1470756}
\newcommand{\bestsetconstant}{0.11798016}
\newcommand{\bestsetconstantfraction}{\frac{182873519}{1550036250}} %
\newcommand{\bestsetratio}{0.31049652} %
\declaretheorem[name=Observation]{observation}
\newcommand{\ncrs}{\operatorname{cr}}
\newcommand{\crs}{\overline{\operatorname{cr}}}
\newcommand{\ptk}[1]{\ensuremath{2^{#1}}}
\newcommand{\ptkd}[1]{\ensuremath{\cdot 2^{#1}}}
\newcommand{\Sl}[1]{\ensuremath{S_l(#1)}}
\newcommand{\Sr}[1]{\ensuremath{S_r(#1)}}
\newcommand{\Sml}[1]{\ensuremath{S(#1)}}
\newcommand{\Ll}[1]{\ensuremath{L_l(#1)}}
\newcommand{\Lr}[1]{\ensuremath{L_r(#1)}}
\newcommand{\Lrg}[1]{\ensuremath{L(#1)}}
\newcommand{\Hl}[1]{\ensuremath{H_l(#1)}}
\newcommand{\Hr}[1]{\ensuremath{H_r(#1)}}
\newcommand{\Fi}[1]{\ensuremath{f_i(#1)}}
\newcommand{\Gi}[1]{\ensuremath{g_i(#1)}}
\newcommand{\Hi}[1]{\ensuremath{h_{i,j}(#1)}}
\newcommand{\sumS}{\ensuremath{\sum_{i=0}^{k-1}16^{k\!-\!i\!-\!1}}}
\newcommand{\ccolor}{$\chi$}
\newcommand{\cccolor}{$\chi'$}
\begin{document}
\title{On the 2-Colored Crossing Number\thanks{
		This project has received funding from the European Union's Horizon 2020 research and innovation programme under the Marie Sk\l{}odowska-Curie grant agreement No 734922. 
		O.A.\, and I.P.\, partially supported by the Austrian Science Fund (FWF) grant W1230.
		R.F.\, and C.H.\, partially supported by CONACYT (Mexico), grant 253261.
		B.V.\, partially supported by Austrian Science Fund within the collaborative DACH project \emph{Arrangements and Drawings} as FWF project \mbox{I 3340-N35}.
		F.Z.\, partially supported by UAM Azcapotzalco, research grant SI004-12, and SNI Conacyt.
	}
}

\author{Oswin~Aichholzer\inst{1}\orcidID{0000-0002-2364-0583} \and
	Ruy~Fabila-Monroy\inst{2}\orcidID{0000-0002-2517-0298} \and
	Adrian~Fuchs\inst{1} \and
	Carlos~Hidalgo-Toscano\inst{3}\orcidID{0000-0003-3578-0193} \and
	Irene~Parada\inst{1}\orcidID{0000-0003-3147-0083} \and
	Birgit~Vogtenhuber\inst{1}\orcidID{0000-0002-7166-4467} \and
	Francisco~Zaragoza\inst{4}
}

\authorrunning{O. Aichholzer et al.}

\institute{Graz University of Technology, Graz, Austria \\
\email{\{oaich,iparada,bvogt\}@ist.tugraz.at, adrian.fuchs@student.tugraz.at} \and
Departamento de Matem\'aticas, Cinvestav, Ciudad de M\'exico, M\'exico \\
\email{ruyfabila@math.cinvestav.edu.mx} \and
Centro de Investigaci\'on e Innovaci\'on en Tecnolog\'ias de la Informaci\'on y Comunicaci\'on, Ciudad de M\'exico, M\'exico \\
\email{carlos.hidalgo@infotec.mx} \and
Universidad Aut\'onoma Metropolitana, Ciudad de M\'exico, M\'exico \\
\email{franz@correo.azc.uam.mx}}
\maketitle              %
\begin{abstract}
Let $D$ be a straight-line drawing of a graph. %
The rectilinear 2-colored crossing number of $D$ is the minimum number of crossings between edges of the same color, 
taken over all possible 2-colorings of the edges of $D$.
First, we show lower and upper bounds on the rectilinear 2-colored crossing number for the complete graph  $K_n$. 
To obtain this result, we prove that asymptotic bounds can be derived from optimal and near-optimal instances with few vertices. 
We obtain such instances using a combination of heuristics and integer programming.
Second, for any fixed drawing of $K_n$, 
we improve the bound on the ratio between its rectilinear 2-colored crossing number and its rectilinear crossing number.

\keywords{complete graph \and rectilinear crossing number \and $k$-colored crossing number.}
\end{abstract}

\section{Introduction}
For a drawing of a non-planar graph $G$ in the plane it is of interest from both  
a theoretical and practical point of view, to minimize the number of crossings. 
The minimum such number is known as the \emph{crossing number} $\ncrs(G)$ of $G$. %
There are many variants on crossing numbers, 
see the comprehensive dynamic survey of Schaefer~\cite{survey_marcus}.
In this paper we focus on a version combining two of them: 
the \emph{$k$-planar crossing number} 
and the \emph{rectilinear crossing number}.

The \emph{$k$-planar crossing number} $\ncrs_k(G)$ of a graph $G$ is the minimum 
of $\ncrs(G_1)+\cdots+\ncrs(G_k)$ over all sets of $k$ graphs $\{G_1, \ldots, G_k\}$ whose union is $G$. 
For $k=2$, it was introduced by Owens~\cite{biplanar1971} who called it the \emph{biplanar crossing number}; see~\cite{biplanar06,biplanar08} for a survey on biplanar crossing numbers.
Shahrokhi et al.~\cite{SHAHROKHI20071106} introduced the generalization to $k \geq 2$.

A \emph{straight-line drawing} of $G$ %
is a drawing $D$ of $G$ in the plane in which 
the vertices are drawn as points in general position, that is, no three points on a line, and
the edges are drawn as straight line segments.
We identify the vertices and edges of the underlying abstract graph with the corresponding ones in the straight-line drawing.
The \emph{rectilinear crossing number} of $G$, $\crs(G)$, is the minimum number of pairs of edges 
that cross in any straight-line drawing of $G$. %
Of special relevance is $\crs(K_n)$, the rectilinear crossing number of the complete graph on $n$ vertices. 
The current
best published bounds on $\crs(K_n)$ are 
$0.379972\binom{n}{4} < \crs(K_n)  < 0.380473 \binom{n}{4}+\Theta(n^3)$~\cite{dup2,crossing_ruy}. 
The upper bound was achieved using a duplication process and has been improved in an upcoming paper~\cite{OngoingCR} to 
$\crs(K_n)  < 0.38044921 \binom{n}{4}+\Theta(n^3).$

A \emph{$k$-edge-coloring} of a drawing $D$ of a graph is an assignment of one of $k$ possible colors to every edge of $D$. 
The \emph{rectilinear $k$-colored crossing number} of a graph $G$, $\crs_k(G)$, is the minimum number of monochromatic crossings (pairs of edges of the same color that cross)
in any $k$-edge-colored straight-line drawing of $G$.
This parameter was introduced before and called the \emph{geometric $k$-planar crossing number}~\cite{pstt_nkpcn18}. 
In the same paper, as well as in~\cite{SHAHROKHI20071106}, 
also the \emph{rectilinear $k$-planar crossing number} was considered, which asks for
the minimum of $\crs(G_1)+\ldots+\crs(G_k)$ over all sets of $k$ graphs $\{G_1, \ldots, G_k\}$ whose union is $G$. 
We prefer our terminology because the terms geometric and rectilinear are very often used interchangeably and because the term $k$-planar is
extensively used in graph drawing with a different meaning; see for example~\cite{planarity_survey,planarity_seminar}. 
We remark that in graph drawing, \emph{rectilinear} sometimes also refers to orthogonal grid drawings (which is not the case here).

In this paper we focus on the case where $G$ is the complete graph $K_n$,
and we prove the following lower and upper bounds on $\crs_2(K_n)$:
\[0.03 \binom{n}{4}+\Theta(n^3) < \crs_2(K_n) < \bestsetconstant \binom{n}{4}+\Theta(n^3).\]
Our approach is based on theoretical results that guarantee asymptotic bounds from the information of small point sets. 
Thus, it implies computationally dealing with small sets, both 
to guarantee a minimum amount of monochromatic crossings (for the lower bound) and 
to find examples with few monochromatic crossings and some other desired properties (for the upper bound).

From an algorithmic point of view, the decision variant of the crossing number problem was shown to be NP-complete for general graphs already in the 1980s by Garey and Johnson~\cite{gj_cnnp1983}.
The version for straight-line drawings is also known to be NP-hard, and actually, computing the rectilinear crossing number is $\exists\mathbb{R}$-complete~\cite{hlg2016}.
So whenever considering crossing numbers, it is rather likely that one faces computationally difficult problems.

In our case the challenge is twofold.
On the one hand, we need to optimize the point configuration (order type) to obtain a small number of crossings, 
which is the original question about the rectilinear crossing number of $K_n$.
On the other hand, we need to determine a coloring of the edges of $K_n$
that minimizes the colored crossing number for a fixed point set. 

For the first problem there is not even a conjecture of point configurations that minimize the rectilinear crossing number of $K_n$ for any $n$. 
The latter problem corresponds to finding a maximum cut in a %
segment intersection graph, which in general is NP-complete~\cite{amsv-mcsig-18}.
Moreover, these two problems are not independent.
There exist examples where a point set with a non-minimal number of uncolored crossings allows for a coloring of the edges so that the resulting colored crossing number is smaller than the best colored crossing number obtained from a set minimizing the uncolored crossing number.
Thus, the two optimization processes need to interleave if we want to guarantee optimality.
But, as we will see in Section~\ref{sec:cr}, even this combined optimization does not guarantee to yield the best asymptotic result. 
There are sets of fixed cardinality and with larger 2-colored crossing number which---due to an involved duplication process---give a better asymptotic constant than the best minimizing sets.
This is in contrast to the uncolored setting~\cite{dup1,dup2}, where for any fixed cardinality, sets with a smaller crossing number always give better asymptotic constants.
Also, it clearly indicates that our extended duplication process for 2-colored crossings differs essentially from the original version. 

As mentioned, drawings with few crossings do not necessarily admit a coloring with few monochromatic crossings.
This observation motivates the following question: 
given a fixed straight-line drawing $D$ of $K_n$, 
what is the ratio between the number of monochromatic crossings for the best 2-edge-coloring of $D$ and the number of (uncolored) crossings in $D$?
A simple probabilistic argument %
shows that this ratio is less than $1/2$. %
In Section~\ref{sec:ratio_fixed_drawing}, we improve that bound, 
showing that for sufficiently large~$n$, it is less than $1/2-c$ for some positive constant $c$. 

In a slight abuse of notation, 
we denote with $\crs (D)$ the number of pairs of edges in $D$ that cross and call it 
the rectilinear crossing number of $D$.
The (rectilinear) 2-colored crossing number of a straight-line drawing $D$, $\crs_2(D)$,
is then the minimum of $\crs(D_1)+\crs(D_2)$, over all pairs of straight-line drawings $\{D_1, D_2\}$ whose union is $D$. 
For a given $2$-edge-coloring $\chi$ of~$D$, we denote with $\crs_2(D,\chi)$ the number of monochromatic crossings in $D$.
Thus, $\crs_2(D)$ is the minimum of $\crs_2(D,\chi)$ over all $2$-edge-colorings $\chi$ of $D$. 
\medskip

\textbf{Outline.}
In Section~\ref{sec:cr} we prove that, given a 2-colored straight-line drawing $D$ of $K_n$, 
there is a duplication process that allows us to obtain a 2-colored straight-line drawing $D_k$ of $K_{2^kn}$ 
for any $k \geq 1$ whose 2-colored crossing number $\crs_2 (D_k)$ can be easily calculated. 
Moreover, we can obtain the asymptotic value when $k\rightarrow \infty$. 
By finding good sets of constant size as a seed for the duplication process,  
we obtain an asymptotic upper bound for $\crs_2(K_n)$. 
In Section~\ref{sec:lower} we obtain a lower bound for $\crs_2(K_n)$ using the crossing lemma, and we improve it with an approach again using small drawings.
For sufficiently large $n$, we show in Section~\ref{sec:ratio_fixed_drawing} that for any straight-line drawing $D$ of $K_n$, ${\crs_2(D)}/{\crs(D)} < 1/2 - c$ for a positive constant $c$, 
that is, using two colors saves more than half of the crossings. 
Finally, in Section~\ref{sec:open} we present some open problems.

\section{Upper Bounds on $\crs_2(K_n)$}
\label{sec:cr}
For the rectilinear crossing number $\crs(K_n)$,
the best upper bound~\cite{OngoingCR} comes from finding examples of straight-line
drawings of $K_n$  (for a small value of $n$) with few crossings which are then used as a seed
for the duplication process in \cite{dup1,dup2}. To be able to apply this duplication process, the starting set $P$ with $m$ points has to contain a halving matching.
If $m$ is even (odd), a \emph{halving line} of $P$ is a line that passes exactly through two (one) points of $P$ and leaves the same number of points of $P$ to each side.
If it is possible to match each point $p$ of $P$ with a halving line of $P$ through this point in such a way that no two points are matched with the same line, $P$ is said to have a \emph{halving matching}.
It is then shown in~\cite{dup1} that every point of $P$ can be substituted by a pair of points in its close neighborhood such that the resulting set $Q$ with $2m$ points contains again a halving matching. Iterating this process leads to the mentioned upper bound for $\crs(K_n)$, where this bound depends only on $m$ and the number of crossings of the starting set $P$.

In this section, we prove that a significantly more involved 
but similar approach can be adopted for the 2-colored case. 
Unlike the original approach, we cannot always get a matching which simultaneously halves both color classes. 
Moreover, even for sets where such a halving matching exists, it cannot be guaranteed that this property is maintained after the duplication step. 
We will see below that we need a more involved approach, 
where the matchings are related to the distribution of the colored edges around a vertex. Consequently, the number of crossings which are obtained in the duplication, 
and thus, the asymptotic bound we get, 
not only depends on the 2-colored crossing number of the starting set, 
but also on the specific distribution of the colors of the edges. 
In that sense, both the heuristics for small drawings and the duplication process for the 2-colored crossing number differ significantly from the uncolored case. 

Throughout this section, $P$ is a set of $m$ points in general position in the plane, where $m$ is even.
Let $p$ be a point in $P$.
By slight abuse of notation, in the following we do not distinguish between a point set and the straight-line drawing of $K_n$ it induces.
Given a 2-coloring $\chi$ of the edges induced by $P$, %
we denote by $\Lrg{p}$ and $\Sml{p}$ the edges incident to $p\in P$ of the larger and smaller color class at $p$, respectively.
An edge $e=(p,q)$ incident to $p$ is called a \emph{\ccolor-halving edge of $p$} if the number of edges of $\Lrg{p}$ 
to the right of the line $\ell_e$ spanned by $e$ (and directed from $q$ to $p$) and the number of 
edges of $\Lrg{p}$ to the left of $\ell_e$ differ by at most one.
A matching between the points
of $P$ and their \ccolor-halving edges is called a \emph{\ccolor-halving matching for $P$}.

\begin{theorem} \label{thm:dup_colored}
	Let $P$ be a set of $m$ points in general position and let $\chi$ be a 2-coloring of the edges induced by $P$.
	If $P$ has a \ccolor-halving matching, then the 2-colored rectilinear crossing number of $K_n$ can be bounded by
	\[ \crs_2 (K_n) \le \frac{24 A}{m^4} \binom{n}{4} + \Theta(n^3) \]
	where $A$ is a rational number that depends on $P$, $\chi$, and the \ccolor-halving matching for $P$. 
\end{theorem}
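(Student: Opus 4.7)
The plan is to adapt the duplication construction of Fabila-Monroy and López~\cite{dup1, dup2} to the 2-colored setting, using the $\chi$-halving matching as the combinatorial invariant driving the recursion. Given $P$ of size $m$ with coloring $\chi$ and $\chi$-halving matching $M=\{e_p : p \in P\}$, I would produce a new 2-colored point set $Q$ of size $2m$ by replacing each $p \in P$ with twins $p^+, p^-$ placed at distance $\varepsilon \to 0$ from $p$ in a direction determined by $e_p$. For distinct $p,q \in P$, the four edges $(p^i,q^j)$ inherit the color $\chi(p,q)$, while each twin edge $(p^+,p^-)$ is assigned a color chosen from the local combinatorics of $\chi$ at $p$. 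Iterating the construction $k$ times yields a set $P_k$ of $2^k m$ points, and the bound on $\crs_2(K_n)$ for general $n$ follows by choosing $k$ with $2^k m \leq n < 2^{k+1} m$, placing $n-2^k m$ additional points in general position, and absorbing the surplus into the $\Theta(n^3)$ term, exactly as in the uncolored construction.

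\textbf{Crossing count after one duplication step.} For $\varepsilon$ small enough the monochromatic crossings of $Q$ split into three contributions: (i) each crossing between same-color edges $(p,q)$ and $(r,s)$ in $P$ produces $16$ monochromatic crossings of the form $(p^i,q^j)\times(r^k,s^\ell)$ in $Q$; (ii) between any two twin groups new crossings may arise whose color and multiplicity depend on whether the twin-offset directions at $p$ and $q$ are compatible; and (iii) each twin edge $(p^+,p^-)$ is crossed by precisely those edges through $p^i$ whose far endpoint lies on the opposite side of the line $\ell_{e_p}$. Because $e_p$ is $\chi$-halving for $\Lrg{p}$, the contribution of type (iii) restricted to the larger color class is controlled; the less symmetric contribution of $\Sml{p}$, together with the chosen color of each twin edge and the geometry of the offset directions, is what forces the constant $A$ to depend on the distribution of the colors in $\chi$ and on the matching $M$, not only on $\crs_2(P,\chi)$ itself.

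\textbf{Iteration and main obstacle.} The central step is showing that $Q$ admits a $\chi'$-halving matching so that the construction can be iterated; the natural candidate pairs each twin $p^i$ either with a descendant of $e_p$ or with the twin edge $(p^+,p^-)$ itself. This is precisely the point where, as the paper puts it, the 2-colored case is ``significantly more involved'' than the uncolored version: an infinitesimal perturbation at $p$ can shift the local color distribution and flip the roles of $\Lrg{p^i}$ and $\Sml{p^i}$, so a naive lift of $M$ need not halve the new larger class. I would resolve this vertex by vertex, selecting the twin-offset direction at $p$ and the color of the twin edge as a function of the local color structure at $p$, and treating vertices with $|\Lrg{p}|\approx|\Sml{p}|$ as a boundary case. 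Once iteration is secured, the recurrence $c_k \leq 16\, c_{k-1} + O(4^k)$ solves to $c_k \leq 16^k A$ for a rational $A$ depending only on $P$, $\chi$, and $M$, which divided by $\binom{2^k m}{4}\sim 16^k m^4/24$ yields the stated bound $\frac{24A}{m^4}\binom{n}{4}+\Theta(n^3)$.
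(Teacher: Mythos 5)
Your overall architecture is the same as the paper's: duplicate each point into two twins along the line of its matching edge, let the four copies of each old edge inherit its color, choose the twin-edge colors from the local color structure at each point, iterate, and extract a $16^k$ leading term. However, the step you yourself identify as the crux --- producing a \cccolor-halving matching for $Q$ so that the construction can be iterated --- is exactly the step you do not carry out. Saying you would ``resolve this vertex by vertex'' and treat $|\Lrg{p}|\approx|\Sml{p}|$ as a boundary case is a plan, not an argument. The paper resolves it by an explicit six-case analysis (on the color of $e$ and the sign of $\Ll{p}-\Lr{p}$) that specifies, in each case, the color of $(p_1,p_2)$ and which of $(p_i,q_1)$, $(p_i,q_2)$, $(p_1,p_2)$ each twin is matched with, and then verifies that the resulting counts $\Ll{p_i},\Lr{p_i}$ differ by at most one. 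Two ingredients you are missing make this work: (a) $m$ is taken even, so every vertex has odd degree and $|\Lrg{p}|\ge|\Sml{p}|+1$; hence adding the single twin edge can never flip which color class is larger at $p_1,p_2$, and the instability you worry about does not occur; (b) one of the six cases (the paper's Case 5) is not stable under duplication --- it splits into two other cases --- so the iteration additionally requires observing that no point of $Q$ ever falls in that case and assuming the same for $P$. Without these, and without the case-by-case verification that the lifted matching really is \cccolor-halving, the induction does not get off the ground, and the closed form of $A$ (which the theorem asserts depends on $\chi$ and the matching) cannot be computed.

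There are also two quantitative slips. First, the additive term in your recurrence is not $O(4^k)$: at step $i$ the set has $2^im$ points, each local count $\Ll{p}$ is $\Theta(2^i)$, so the contribution $\sum_p\bigl[\binom{\Ll{p}}{2}+\cdots\bigr]$ is $\Theta(8^i)$. The form $c_k=A\cdot 16^k+(\text{lower order})$ survives because $\sum_i 16^{k-i}8^i=O(16^k)$, but note that these are precisely the terms that make $A$ depend on the color distribution rather than only on $\crs_2(P,\chi)$; with your $O(4^k)$ as written you could not justify your own (correct) claim about what $A$ depends on without tracking them. Second, for $2^km<n<2^{k+1}m$ you cannot ``place $n-2^km$ additional points and absorb the surplus into $\Theta(n^3)$'': you would be adding $\Theta(n)$ points, each contributing up to $\Theta(n^3)$ new crossings, i.e.\ $\Theta(n^4)$ in total, which is the order of the main term. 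The paper instead interpolates via monotonicity of $\crs_2(K_n)$; equivalently, one deletes points from $Q_{k+1}$ by an averaging argument.
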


\begin{proof}
	First we describe a process to obtain from $P$ a set $Q$ of $2m$ points, a 2-edge-coloring $\chi'$ of the 
	edges that $Q$ induces, %
	and a \cccolor-halving matching for $Q$.
	The set $Q$ is constructed as follows. 
	Let $p$ be a point in $P$ and $e = (p, q)$ its \ccolor-halving edge in the matching.
	We add to $Q$ two points $p_1, p_2$ placed along the line spanned by $e$ and in a small neighborhood of $p$ 
	such that:
	\begin{enumerate}[(i)]
		\item if $f$ is 
		an edge different from $e$ that is incident to $p$, then $p_1$ and $p_2$ lie 
		on different sides of the line spanned by $f$;
		\item if $f$ is 
		an edge different from $e$ that is not incident to $p$, then $p_1$ and $p_2$
		lie on the same side of the line spanned by $f$ as $p$; and
		\item the point $p_1$ is further away from $q$ than $p_2$.
	\end{enumerate}
	The set $Q$ has $2m$ points and the above conditions ensure that they are in general position.  

	Next, we define a coloring $\chi'$ and a \cccolor-halving matching for $Q$.
	For every edge $(p, q)$ of~$P$, 
	we color the four edges $(p_i, q_j), i,j \in \{1, 2\}$ with the same color as $(p, q)$.
	Hence, the only edges remaining to be colored are the edges $(p_1,p_2)$ between the duplicates of a point $p\in P$.
	Let $\ell_e$ be the line spanned by $e$ and directed from $q$ to $p$. Further, let 
	$q_1$ and $q_2$ be the points that originated from duplicating $q$, 
	such that $q_1$ lies to the left of $\ell_e$ %
	and $q_2$ lies to the right of~$\ell_e$. %
	Denote by $\Ll{p}$ and $\Lr{p}$ the number of edges in $\Lrg{p}$ to the left and right of $e$, respectively. 
	Analogously, denote by $\Sl{p}$ and $\Sr{p}$ the number edges in $\Sml{p}$ to the left and right of $e$. 
	For the following case distinction, 
	we assume that the colors are red and blue and that the larger color class at $p$ is blue.%
	
	\begin{figure}[tb]
		\centering
		\includegraphics[width=1\textwidth]{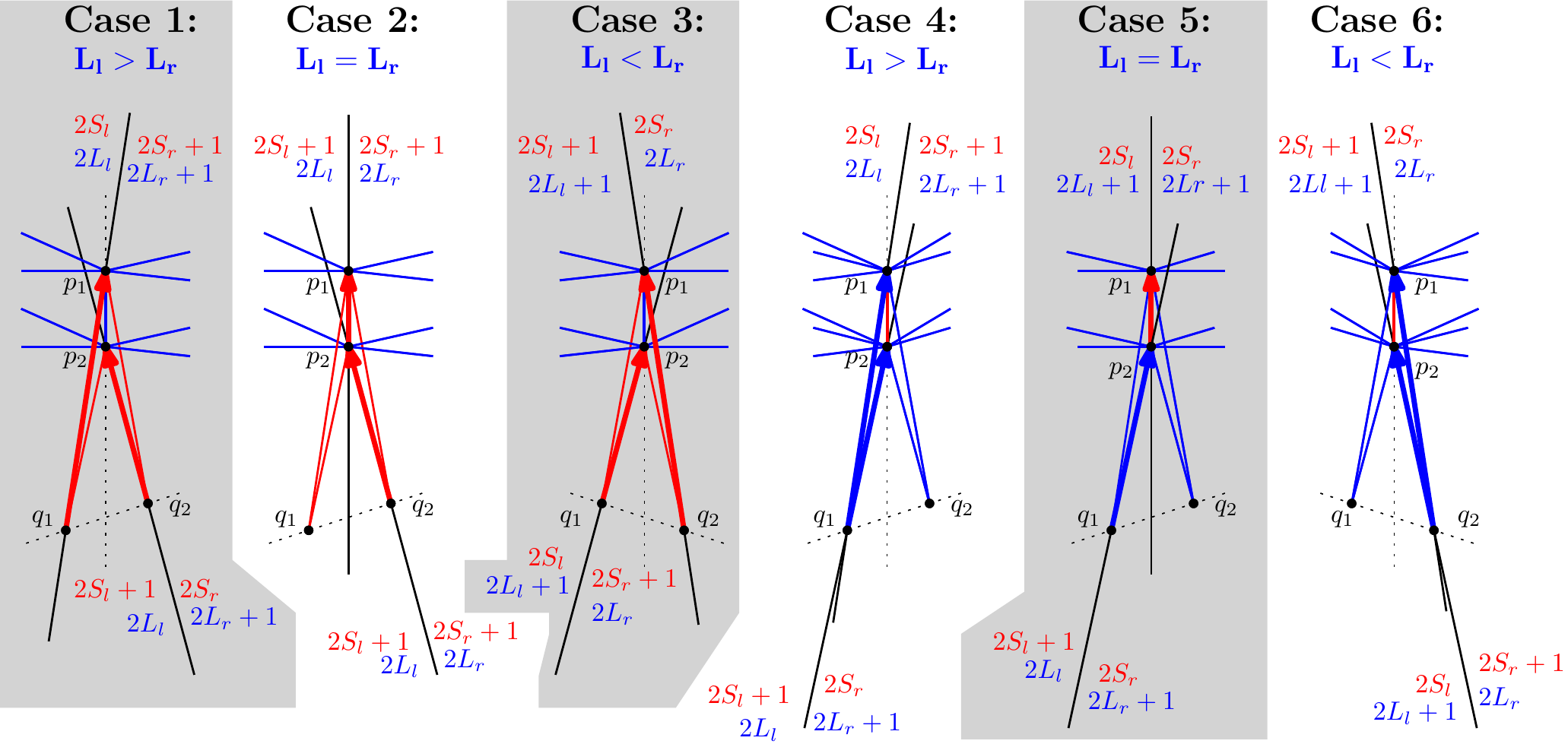}
		\caption{The cases in the duplication process of Theorem~\ref{thm:dup_colored} 
			when the larger color class at $p$ is blue.
			The dotted lines represent the lines spanned by the \ccolor-halving matching edges for $P$. 
			The numbers of blue (red) edges at $p$ to the left and right of $l_e$, 
			is denoted with $L_l$ and $L_r$ ($S_l$ and $S_r$), respectively.
		}
		\label{fig:dup_cases}
	\end{figure}
	
	There are six cases in which $p$ can fall, depending on the color of the edge $e$ and on the relation between 
	the numbers $\Ll{p}$ and $\Lr{p}$ of blue edges incident to $p$ on the left and the right side of $\ell_e$; see \figurename~\ref{fig:dup_cases}.
	The edge $e$ of $P$ has color red in the first three cases and color blue in the last three cases.
	The edge $(p_1,p_2)$ receives color blue in Cases 1 and 3, and color red in the remaining cases.
	The thick edges in \figurename~\ref{fig:dup_cases} represent the matching edges for $p_1$ and $p_2$ in $Q$, 
	where the arrow points to the point it is matched with.
	For each of $p_1$ and $p_2$, the resulting numbers of incident red and blue edges that are 
	to the left and to the right of the line spanned by the matching edge are written next to those lines in the figure. 
	They also show 
	that the matching edges 
	are indeed \cccolor-halving edges in each case. 
	A detailed case distinction can be found in Appendix~\ref{app:matchingcases}.
	
	Having completed the coloring $\chi'$ for the edges induced by $Q$,
	we next consider the number of monochromatic crossings in the resulting drawing on $Q$. 
	We claim the following for $\crs_2(Q, \chi')$:

	\begin{restatable}{claim}{oneDup}\label{claim:one_dup}
		The pair $(Q,\chi')$ satisfies %
		\begin{align*} 
		\crs_2(Q,\chi') &= 16\ \crs_2(P,\chi) + \binom{m}{2} - m \\ %
		&+ 4\sum_p \left( \binom{\Ll{p}}{2} + \binom{\Lr{p}}{2} + \binom{\Sl{p}}{2} + \binom{\Sr{p}}{2}\right) \\ %
		&+ 2 \sum_p (\Hl{p} + \Hr{p}). %
		\end{align*}
	\end{restatable}

	The proof of this claim follows the same counting technique used in \cite{dup1}. 
	The proof can be found in Appendix~\ref{app:onedup}.

	We now apply the duplication process multiple times. 
	To this end, consider again the six different cases for a point $p\in P$ 
	when obtaining a coloring and a matching for $Q$.
	Note that if one of the Cases 1, 2, 3, 4 and 6 applies for $p$, 
	then the same case applies for its duplicates $p_1,p_2 \in Q$ 
	(and will apply in all further duplication iterations).
	If $p$ falls in Case 5, then for $p_1$ and $p_2$ we have Case 2 and 4, respectively.
	As no point in $Q$ falls in Case 5,
	from now on, we assume that $P$ is such that no point of $P$ falls in Case 5 either. 

	Let $k \geq 1$ be an integer and
	let $(Q_{k},\chi_{k})$ be the pair obtained by iterating the duplication process $k$ times, with $(Q_0,\chi_0)=(P,\chi)$. 
	We claim the following on $\crs_2(Q_k, \chi_k)$, 
	the number of monochromatic crossings in the 2-edge-colored drawing of $K_n$ induced by $Q_k$ and $\chi_k$:
	
	\begin{restatable}{claim}{KDups}\label{claim:k_dups}
		After $k$ iterations of the duplication process, the following holds
		\[ \crs_2(Q_k,\chi_k) = A\cdot2^{4k} + B\cdot2^{3k} + C\cdot2^{2k} + D\cdot2^{k} \]
		where $A, B, C$ and $D$ are rational numbers that depend on $P$ and its \ccolor-halving matching.
	\end{restatable} 
	
	The proof of this claim uses a careful analysis of the structure of 
	$(Q_k,\chi_k)$ in dependence of $(P,\chi)$ and the \ccolor-halving matching for $P$.
	This analysis, followed by involved calculations to obtain the statement of Claim~\ref{claim:k_dups},
	can be found in Appendix~\ref{app:kdups}. 
	Applying Claim~\ref{claim:k_dups} to an initial drawing on $m$ vertices and letting $n = 2^km$, we get: 
	\[ \crs_2(K_n) \le \crs_2(Q_k,\chi_k) = \frac{24 A}{m^4} \binom{n}{4} + \Theta(n^3) \]
	which completes the proof of Theorem~\ref{thm:dup_colored} when $n$ is of the form $2^km$. 
	The proof for $2^km < n < 2^{k+1}m$ then follows from the fact that %
	$\crs_2(K_n)$ is an increasing function.
\end{proof}

We remark that the duplication process described in the proof of Theorem~\ref{thm:dup_colored} can also be applied if the initial set $P$ has odd cardinality. 
However, then it might happen that the resulting matching is not $\chi'$-halving for the resulting set~$Q$.
Moreover, a similar process can even be applied with any matching between the points of $P$ and the edges induced by $P$, 
where in that situation one needs to specify how the colors for the edges between duplicates of points (and possibly a matching for the resulting set) is chosen. 

In the uncolored duplication process for obtaining bounds on $\crs(K_n)$, halving matchings always yield the best asymptotic behavior,
which only depends on $|P|$ and $\crs(P)$. 
This is not the case for the 2-colored setting, where we ideally would like to achieve simultaneously for every point $p\in P$ that (i) both color classes are of similar size, (ii) both color classes are evenly split by the matching edge, and (iii) $\crs_2(P)$ is small. %
Yet, this is in general not possible.
Starting with a \ccolor-halving matching for $P$ we obtain (ii) at least for the larger color class at every point of~$P$.
Moreover, this is hereditary by the design of our duplication process.

The results of this section imply that for large cardinality we can obtain straight-line drawings of the complete graph with a reasonably small 2-colored crossing number by starting from \emph{good} sets of constant size.
Similar as in~\cite{OngoingCR} we apply a heuristic combining different methods to obtain straight-line drawings of the complete graph with low 2-colored crossing number.
Our heuristic iterates three steps of (1) locally improving a set, (2) generating larger good sets, and (3) extracting good subsets, where also after steps (2) and (3) a local optimization is done.
The currently best (with respect to the crossing constant, see below) straight-line drawing $D$ with 2-edge coloring $\chi$ we found in this way\footnote{%
	The interested reader can get a file with the coordinates of the points, the colors of the edges, and a \ccolor-halving matching from 
	\url{http://www.crossingnumbers.org/projects/monochromatic/sets/n135.php}.} 
has $n= \bestsetcardinality$ vertices, a 2-colored crossing number of $\crs_2(D,\chi) = \bestsetcrossings$, and contains a \ccolor-halving matching.

Let $\crs_2$ be the {\it rectilinear 2-colored crossing constant}, 
that is, the constant such that the best straight-line drawing of $K_n$ for large values of $n$ has at most $\crs_2 {n \choose 4}$ monochromatic crossings. 
Its existence follows from the fact that the limit $\lim_{n\rightarrow \infty} \crs_2(K_n)/{n\choose 4}$ exits and is a positive number 
(the proof goes along the same lines as for the uncolored case~\cite{crossingsKnKmn}).
Using the above-mentioned currently best straight-line 2-edge colored drawing and plugging it into the machinery developed in the proof of Theorem~\ref{thm:dup_colored} we get

\begin{theorem}
	\label{thm:crossingconstant2}
	The rectilinear 2-colored crossing constant satisfies
	$$\crs_2 \leq  \bestsetconstantfraction < \bestsetconstant.$$
\end{theorem}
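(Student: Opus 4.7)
The plan is to feed the concrete drawing $D$ of $K_{135}$ recorded in the file referenced just before the statement into Theorem~\ref{thm:dup_colored}. The drawing comes together with a 2-edge-coloring $\chi$ attaining $\crs_2(D,\chi)=\bestsetcrossings$ and with an explicit $\chi$-halving matching, so the hypotheses of Theorem~\ref{thm:dup_colored} are met. The parity of $m=\bestsetcardinality$ is handled by the remark following that theorem together with the observation preceding Claim~\ref{claim:k_dups}: even if a few vertices of $P$ fall into Case~5 or the matching obtained after one duplication is not quite $\chi'$-halving, one may pass to $Q_1$ (or $Q_2$) as the actual seed, which only shifts the index and leaves the asymptotic constant unchanged. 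Dividing the bound of Theorem~\ref{thm:dup_colored} by $\binom{n}{4}$ and letting $n\to\infty$ then yields
\[\crs_2 \;\le\; \frac{24A}{m^4} \;=\; \frac{24A}{135^4},\]
so the task reduces to pinning down $A$.

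The second step is to compute $A$ explicitly for this seed. By Claim~\ref{claim:k_dups}, $A$ is a rational number fully determined by $(P,\chi)$ and its $\chi$-halving matching: unwinding the single-step recursion of Claim~\ref{claim:one_dup} and reading off the leading coefficient of the closed form of Claim~\ref{claim:k_dups} expresses $A$ as an explicit linear combination of $\crs_2(P,\chi)$ and of the vertex-local quantities $\binom{\Ll{p}}{2}$, $\binom{\Lr{p}}{2}$, $\binom{\Sl{p}}{2}$, $\binom{\Sr{p}}{2}$, $\Hl{p}$, $\Hr{p}$, with weights dictated by which of the six cases of Figure~\ref{fig:dup_cases} each vertex $p$ falls into. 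All of these counts are directly extractable from the coordinates of $P$, the coloring $\chi$, and the stored $\chi$-halving matching.

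I would then grind out the enumeration by computer from the linked data file: for each of the $135$ vertices, classify it into one of the six cases, read off its six local counts, and add up the contributions with the prescribed weights, together with the global term coming from $\crs_2(P,\chi)$. Plugging the resulting rational $A$ into $24A/135^4$ and simplifying produces the displayed fraction $\bestsetconstantfraction$; a direct decimal comparison then verifies $\bestsetconstantfraction<\bestsetconstant$, which closes the proof.

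The obstacle here is purely computational rather than conceptual: the whole argument is Theorem~\ref{thm:dup_colored} specialized to a single explicit seed. The actual work is the bookkeeping needed to evaluate $A$ over $135$ vertices under the six-case analysis of Figure~\ref{fig:dup_cases}, plus two preliminary sanity checks, namely that the stored matching is genuinely $\chi$-halving at every vertex and that any Case-5 vertices are absorbed by the index-shift trick above. Once those checks pass, the theorem follows from a finite, mechanical calculation.
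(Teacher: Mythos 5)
Your proposal coincides with the paper's proof, which consists precisely of taking the published $135$-point drawing with its coloring, its $\crs_2(D,\chi)=\bestsetcrossings$ monochromatic crossings, and its \ccolor-halving matching, plugging it into the machinery of Theorem~\ref{thm:dup_colored}, and evaluating the leading coefficient $A$ by a mechanical computation to obtain $24A/135^4=\bestsetconstantfraction$. Your extra attention to the odd cardinality $m=135$ and to Case-5 vertices is slightly more explicit than what the paper writes down (the paper only touches these points in the remark following Theorem~\ref{thm:dup_colored} and in the reduction to Cases 1--4 and 6 inside its proof), but the argument is the same.
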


In~\cite{dup2} a lower bound of $\crs \geq \frac{277}{729} > 0.37997267$ has been shown for the rectilinear  
crossing constant. 
We can thus give an upper bound on the asymptotic ratio between the best rectilinear 2-colored drawing of $K_n$ and the best
rectilinear drawing of $K_n$ of~~$\crs_2  / \crs \leq \bestsetratio$.

\section{Lower Bounds on $\crs_2(K_n)$}%
\label{sec:lower}

In this section we consider lower bounds %
for the 2-colored crossing number and the biplanar crossing number of $K_n$. 

In related work~\cite{pstt_nkpcn18}, 
the authors present lower and upper bounds on the $\sup {\crs_k (G)}/{\crs(G)}$ where the supremum is taken over all non-planar graphs. 
We remark that this lower bound does not yield a lower bound for $\crs_2(K_n)$ 
as their bound is obtained for ``midrange" graphs (graphs with a subquadratic but superlinear number of edges). 
Czabarka et al. mention a lower bound on the biplanar crossing number of general graphs depending on the number of edges~\cite[Equation 3]{biplanar08}. 
For the complete graph, this yields a lower bound of $\crs_2(K_n) \geq 1/1944 n^4 - O(n^3)$.
A better bound of $\crs_2 \geq \frac{24}{29\cdot 32} = 3/116 > 1/39$ can be obtained from (an improved version of) the crossing lemma~\cite{crlemma_ackerman,mat}, 
which states that for an undirected simple graph with $n$ vertices and $e$ edges with $e > 7n$, the crossing number of the graph is at least $\frac{e^3}{29n^2}$.

\begin{figure}[tb]
	\centering
	\includegraphics[]{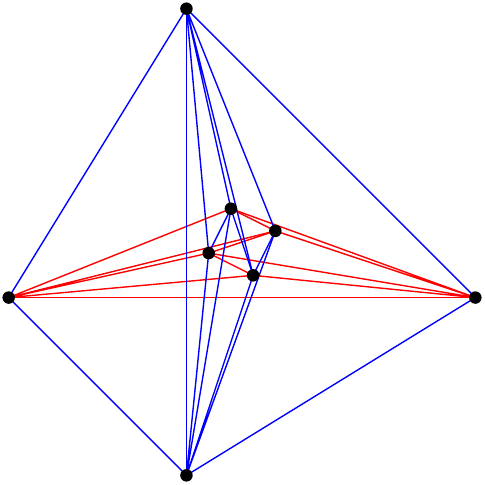}
	\hspace{1.5cm}
	\includegraphics[]{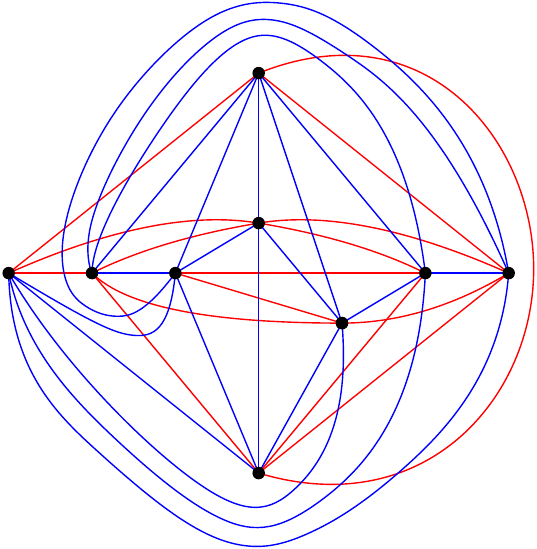}
	\caption{Left: a 2-colored rectilinear drawing of $K_8$ without monochromatic crossings.
		Right: a 2-colored %
		drawing of $K_9$ with only one monochromatic (red) crossing.}
	\label{fig:small_drawings}
\end{figure}

Alternatively, the
following result shows that from the 2-colored rectilinear crossing number of small sets we can obtain lower bounds for larger sets.

\begin{lemma}
	\label{lem_lower}
	Let $\crs_2({m})=\hat{c}$ for some $m \geq 4$. Then for $n>m$ we have $\crs_2(K_n) \geq \frac{24 \hat{c}}{m(m-1)(m-2)(m-3)}{n \choose 4}$ which implies $\crs_2 \geq \frac{24 \hat{c}}{m(m-1)(m-2)(m-3)}$.
\end{lemma}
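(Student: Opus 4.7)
The plan is to use a standard double-counting (averaging) argument, exactly analogous to the classical one that gives a lower bound on $\crs(K_n)$ from $\crs(K_m)$. The key observation is that the restriction of any 2-edge-colored straight-line drawing of $K_n$ to any $m$-subset of its vertices is itself a 2-edge-colored straight-line drawing of $K_m$, and hence inherits the lower bound $\hat{c}$.

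Concretely, I would proceed as follows. Fix an arbitrary 2-edge-colored straight-line drawing $(D,\chi)$ of $K_n$ that attains $\crs_2(K_n)$. For each $m$-subset $S$ of the $n$ vertices, consider the sub-drawing $D[S]$ together with the coloring $\chi$ restricted to the edges spanned by $S$. This pair is a 2-edge-colored straight-line drawing of $K_m$, so by hypothesis it contains at least $\hat{c}$ monochromatic crossings. Summing over all $\binom{n}{m}$ such subsets gives
\[
  \binom{n}{m}\,\hat{c}\;\le\;\sum_{S\in\binom{V}{m}}\crs_2(D[S],\chi|_S).
\]
Now I would double-count the right-hand side by swapping the order of summation: each monochromatic crossing in $(D,\chi)$ is determined by exactly four vertices, so it is counted in precisely those $m$-subsets that contain these four vertices, i.e.\ in $\binom{n-4}{m-4}$ of them.

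This yields
\[
  \binom{n}{m}\,\hat{c}\;\le\;\binom{n-4}{m-4}\,\crs_2(D,\chi)\;=\;\binom{n-4}{m-4}\,\crs_2(K_n),
\]
and using the identity $\binom{n}{m}/\binom{n-4}{m-4}=\binom{n}{4}/\binom{m}{4}$ we obtain
\[
  \crs_2(K_n)\;\ge\;\frac{\hat{c}}{\binom{m}{4}}\binom{n}{4}\;=\;\frac{24\,\hat{c}}{m(m-1)(m-2)(m-3)}\binom{n}{4},
\]
which is the claimed bound. Dividing by $\binom{n}{4}$ and taking $n\to\infty$ gives the bound on the constant $\crs_2$.

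There is essentially no hard step here: the argument is a direct averaging over $m$-subsets, and the only thing to verify is the trivial fact that restricting a straight-line drawing and its 2-coloring to a subset of vertices produces a valid 2-edge-colored straight-line drawing of $K_m$. The whole point of the lemma is that it converts a finite computational certificate $\crs_2(K_m)\ge\hat{c}$ (which can be established, for instance, by exhaustive search or integer programming over order types on $m$ points together with all 2-colorings) into an asymptotic lower bound, so the work has already been done outside the proof.
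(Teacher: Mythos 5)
Your proposal is correct and follows exactly the same counting argument as the paper: average over all $m$-subsets, note each induced sub-drawing has at least $\hat{c}$ monochromatic crossings, and observe that each crossing is counted $\binom{n-4}{m-4}$ times. The paper's proof is just a terser version of yours.
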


\begin{proof}
	Every subset of $m$ points of $K_n$ induces a drawing with at least $\hat{c}$ crossings, and thus we have $\hat{c}{n \choose m}$ crossings in total. In this way every crossing is counted ${{n-4} \choose {m-4}}$ times. This results in a total of $\frac{24 \hat{c}}{m(m-1)(m-2)(m-3)}{n \choose 4}$ crossings.
\end{proof}

As $K_8$ can be drawn such that $\crs_2(K_8)=0$ (see \figurename~\ref{fig:small_drawings} left) we next determine $\crs_2(K_9)$.
We use the optimization heuristic mentioned from Section~\ref{sec:cr} to obtain good colorings for all 158 817 order types of $K_9$ 
(which are provided by the order type data base~\cite{otdb}).
In this way, it is guaranteed that all (crossing-wise) different straight-line drawings of $K_9$ (uncolored) are considered.

To prove that the heuristics indeed found the best colorings we consider the intersection graph for each drawing $D$. 
In the intersection graph every edge in $D$ is a vertex, and two vertices are connected if their edges in $D$ cross. 
Note that each odd cycle in the intersection graph of $D$ gives rise to a monochromatic crossing in $D$. 
On the other hand, several odd cycles might share a crossing and only one monochromatic crossing is forced by them. 
We thus set up an integer linear program, 
where for every crossing of $D$ we have a non-negative variable 
and for each odd cycle the sum of the variables corresponding to the crossings of the cycle has to be at least~1. 
The objective function aims to minimize the sum of all variables, 
which by construction is a lower bound for the number of monochromatic crossings in $D$.

With that program and some additional methods for speedup (see~\cite{fuchs2019} for details), we have been able to obtain matching lower bounds and hence determine the 2-colored crossing numbers for all order types of $K_9$ within a few hours.
The best drawings we found have 2 monochromatic crossings, and thus $\crs_2(K_9)=2$.
Using Lemma~\ref{lem_lower} for $m=9$ and $\hat{c}=2$ we get a bound of $\crs_2 \geq 1/63$,
which is worse than what we obtained from the crossing lemma.
Repeating the process of computing lower bounds for sets of small cardinality, we checked all order types of size up to 11~\cite{ak-aoten-06} and
obtained $\crs_2(K_{10})=5$ and $\crs_2(K_{11})=10$. By Lemma~\ref{lem_lower}, the latter gives the improved lower bound of $\crs_2 \geq 1/33$.

\subsection{Straight-Line versus General Drawings}
The best straight-line drawings of $K_n$ with $n \leq 8$ have no monochromatic crossing, see again \figurename~\ref{fig:small_drawings} left. 
In~\cite[Section 3]{pstt_nkpcn18}, the authors state that no graph is known were the $k$-planar crossing number is strictly smaller than the rectilinear $k$-planar crossing number for 
any $k \geq 2$. 
Moreover, according to personal communication~\cite{ct}, the similar question whether a graph exists where the $k$-planar crossing number is strictly smaller than the rectilinear $k$-colored crossing number was open.
We next argue that $K_9$ is such an example. 
From the previous section we know that $\crs_2(K_9)=2$.
Inspecting rotation systems for $n=9$ \cite{aafhpprsv-agdsc-15} which have the minimum number of $36$ crossings, we have been able to construct a drawing of $K_9$ which has only one monochromatic crossing, see \figurename~\ref{fig:small_drawings} right.
As the graph thickness of $K_9$ is 3~\cite{thickK9_1,thickK9_2}, 
we cannot draw $K_9$ with just two colors without monochromatic crossings. 
Thus, we get the following result.

\begin{observation}
	The biplanar crossing number for $K_9$ is~1 and is thus strictly smaller than the rectilinear 2-colored crossing number $\crs_2(K_9)=2$. 
\end{observation}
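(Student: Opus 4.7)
The observation has two independent assertions to verify: the biplanar crossing number of $K_9$ is at most $1$, and it is at least $1$. The strict inequality with $\crs_2(K_9)=2$ then follows from the value determined earlier in this section.

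For the upper bound, I would exhibit an explicit topological drawing of $K_9$ together with a 2-edge-coloring containing exactly one monochromatic crossing; the drawing on the right of Figure~\ref{fig:small_drawings} serves this purpose. To construct such a candidate in a principled way, I would start from the enumeration in~\cite{aafhpprsv-agdsc-15} of rotation systems of $K_9$ realizing the minimum crossing number $36$, and, for each one, search among the $2^{36}$ edge-2-colorings for one whose monochromatic-crossing count is at most $1$ (this can be sped up substantially by a greedy or LP-relaxation step on the intersection graph, analogous to the ILP discussed earlier in Section~\ref{sec:lower}). Once a good rotation system is identified, the verification that the exhibited drawing has precisely one monochromatic crossing reduces to enumerating the $36$ pairs of crossing edges and checking that exactly one pair is like-colored.

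For the lower bound, I would argue via the classical notion of graph thickness. Suppose for contradiction that the biplanar crossing number of $K_9$ were $0$. Then by the definition of $\ncrs_2$ there would exist a decomposition $K_9 = G_1 \cup G_2$ with $\ncrs(G_1)=\ncrs(G_2)=0$, so both $G_1$ and $G_2$ are planar. This would imply that the thickness of $K_9$ is at most $2$, contradicting the known value $3$ established in~\cite{thickK9_1,thickK9_2}. Hence the biplanar crossing number is at least $1$, which combined with the construction above yields the equality $\ncrs_2(K_9)=1$. The strict inequality $1 < 2 = \crs_2(K_9)$ is then immediate.

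The main obstacle is the upper bound, since it is essentially constructive and relies on the availability of the rotation-system enumeration from~\cite{aafhpprsv-agdsc-15}; without that database, producing a drawing that simultaneously attains $36$ total crossings and admits a 2-coloring with a single monochromatic crossing would require a significant search. The lower bound and the separation from $\crs_2(K_9)$ are conceptually straightforward, relying only on the thickness of $K_9$ and the value $\crs_2(K_9)=2$ already determined.
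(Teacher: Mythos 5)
Your proposal is correct and follows essentially the same route as the paper: the upper bound comes from the drawing in \figurename~\ref{fig:small_drawings} (right), found by inspecting the crossing-minimal rotation systems of $K_9$ from~\cite{aafhpprsv-agdsc-15}, and the lower bound follows from the thickness of $K_9$ being $3$~\cite{thickK9_1,thickK9_2}, which rules out a decomposition into two planar subgraphs. The separation from $\crs_2(K_9)=2$ is then immediate, exactly as in the paper.
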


\section{Upper Bounds on the Ratio $\crs_2(D) / \crs(D)$}  %
\label{sec:ratio_fixed_drawing}
In this section we study the extreme values that ${\crs_2(D)}/{\crs(D)}$ can attain
for straight-line drawings $D$ of $K_n$.
Using a simple probabilistic argument as in \cite{pstt_nkpcn18}, 2-coloring the edges uniformly at random, it can be shown that $\crs_2(D)/\crs(D) < 1/2$ for every 
straight-line drawing $D$, even if the underlying graph is not $K_n$.

In the following, we show that for $K_n$ this upper bound on ${\crs_2(D)}/{\crs(D)}$ can be improved. %
To obtain our improved bound, we find subdrawings of $D$ and colorings
such that many of the crossings in these drawings are between edges of different colors.
To this end, we need to find large subsets of vertices of $D$ with identical geometric
properties. 
We use the following definition and theorem.
Let $(Y_1,..., Y_k)$ be a tuple of finite subsets of points in the plane. 
A \emph{transversal} of $(Y_1,..., Y_k)$ is a tuple of points
$(y_1,\dots,y_k)$ such that $y_i \in Y_i$ for all $i$.

\begin{theorem}[Positive Fraction Erd\H{o}s-Szekeres theorem]\label{thm:pos_es}
	For every integer $k \ge 4$ there is a constant $c_k > 0$ such that every sufficiently large
	finite point set $X \subset \mathbb{R}^2$ in general position contains $k$ disjoint subsets $Y_1,\dots,Y_k$,
	of size at least $c_k|X|$ each, such that each transversal of $(Y_1,\dots,Y_k)$ is in convex position.
\end{theorem}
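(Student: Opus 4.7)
The plan is to deduce this positive fraction statement from two ingredients: the Erd\H{o}s-Szekeres theorem on convex polygons, which supplies an integer $N=N(k)$ such that every set of at least $N$ points in general position in the plane contains $k$ points in convex position; and the \emph{same-type lemma} of B\'ar\'any and Valtr, which asserts that for every integer $t \geq 1$ there is a constant $\alpha_t > 0$ so that every sufficiently large set $X \subset \mathbb{R}^2$ in general position contains $t$ pairwise disjoint subsets $Z_1, \ldots, Z_t$, each of size at least $\alpha_t |X|$, such that every transversal of $(Z_1, \ldots, Z_t)$ has the same order type.

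First I would set $t := N(k)$ and apply the same-type lemma to obtain $Z_1, \ldots, Z_t \subset X$ of sizes at least $\alpha_t |X|$ whose transversals all share one common order type. Then I would pick any single transversal $(z_1, \ldots, z_t)$ and invoke Erd\H{o}s-Szekeres on it to select indices $i_1 < \cdots < i_k$ with $(z_{i_1}, \ldots, z_{i_k})$ in convex position. Setting $Y_j := Z_{i_j}$ and $c_k := \alpha_{N(k)}$ then finishes the argument: the $Y_j$ are disjoint and each has at least $c_k |X|$ points, and any transversal $(y_1, \ldots, y_k)$ of $(Y_1, \ldots, Y_k)$ extends (by filling in arbitrary representatives from the remaining $Z_l$) to a transversal of $(Z_1, \ldots, Z_t)$. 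The same-type property forces this extension to have the same order type as $(z_1, \ldots, z_t)$; in particular the subsequence indexed by $i_1, \ldots, i_k$ has the same orientation on every triple as $(z_{i_1}, \ldots, z_{i_k})$, and therefore lies in convex position.

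The real difficulty is packaged entirely inside the same-type lemma, which is the main obstacle. A proof proceeds by shrinking the groups one at a time: at each step one applies the first selection lemma to locate a point that lies in a positive fraction of the simplices spanned by the remaining groups, then replaces one group by a region of linear size that lies on a fixed side of every relevant line through that point, thereby freezing the orientation of all transversal triples that involve the new group. The delicate issue is to perform this replacement so that each $Z_j$ still retains linear size after all $t = N(k)$ iterations; this is exactly what forces the constant-fraction guarantee of the first selection lemma, rather than the weaker density bounds that more naive averaging would yield.
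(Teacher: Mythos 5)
The paper does not prove this theorem at all: it is quoted as a known result of B\'ar\'any and Valtr, with Matou\v{s}ek's book as a secondary reference. Your derivation --- the same-type lemma with $t = N(k)$ groups, Erd\H{o}s--Szekeres applied to a single transversal to pick out $k$ groups, and the fact that convex position is an order-type invariant to transfer convexity to every transversal --- is precisely the standard proof found in those references, and it is correct. (Your closing sketch of how the same-type lemma itself is established is tangential and not quite the textbook route, which separates triples of groups via ham-sandwich cuts rather than the first selection lemma; but since you invoke the lemma as a black box, this does not affect your argument.)
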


The Positive Fraction Erd\H{o}s-Szekeres theorem was proved by B\'ar\'any and Valtr~\cite{positive_es}, see also
Matou{\v{s}}ek's book~\cite{mat}. 
Although it is not stated in the theorem, every transversal of the $Y_i$  has the same (labelled) order type. 
Making use of that result we obtained the following theorem. 
\begin{restatable}{theorem}{TwoplusC}
	\label{thm:2plusC}
	There exists an integer $n_0>0$ and a constant $c>0$ such that for any straight-line drawing $D$ of $K_n$ on $n \ge n_0$ vertices, 
	${\crs_2(D)}/{\crs(D)} < \frac{1}{2}-c.$
\end{restatable}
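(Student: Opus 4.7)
The plan is to apply the Positive Fraction Erd\H{o}s--Szekeres theorem (Theorem~\ref{thm:pos_es}) with $k=4$ and then analyse a hybrid colouring, partly deterministic and partly uniformly random. First I would invoke Theorem~\ref{thm:pos_es} to obtain, for every sufficiently large $n$, four pairwise disjoint subsets $Y_1,Y_2,Y_3,Y_4$ of the vertex set of $D$, each of size at least $c_4 n$, whose transversals are in convex position and share a common labelled order type. I would then trim points so that $|Y_i|=s$ for all $i$ (with $s\ge c_4 n$), and relabel the $Y_i$'s so that in every transversal $(y_1,y_2,y_3,y_4)$ the cyclic order on the convex hull is $y_1,y_2,y_3,y_4$. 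Define
\[ E_1=\{y_1y_3 : y_1\in Y_1,\, y_3\in Y_3\}, \qquad E_2=\{y_2y_4 : y_2\in Y_2,\, y_4\in Y_4\}. \]
Completing any edge of $E_1$ with an arbitrary $y_2\in Y_2$ and $y_4\in Y_4$ yields a convex quadrilateral in which that edge and the corresponding edge of $E_2$ are opposite diagonals, so every edge of $E_1$ crosses every edge of $E_2$; the number of crossings of $D$ between $E_1$ and $E_2$ is therefore exactly $C:=|E_1|\cdot|E_2|=s^4\ge c_4^4 n^4$.

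Next I would colour every edge of $E_1$ red, every edge of $E_2$ blue, and every remaining edge of $D$ independently and uniformly at random. Let $A$ and $B$ count the crossings of $D$ whose two edges both lie in $E_1$ and in $E_2$, respectively, and let $X=\crs(D)$. Partitioning crossings by their colour classes gives
\[ \mathbb{E}[M] \;=\; A + B + \tfrac{1}{2}\bigl(X-A-B-C\bigr) \;=\; \tfrac{X}{2}-\tfrac{C-A-B}{2}, \]
so it suffices to produce a constant-factor lower bound on $C-A-B$ relative to~$X$. The crux is the elementary planar fact that four points in general position admit at most one pair of disjoint crossing edges: in convex position only the two diagonals cross, and otherwise no disjoint pair crosses. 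Applied to any $\{y_1,y_1'\}\subset Y_1$ and $\{y_3,y_3'\}\subset Y_3$, this bounds the contribution to $A$ coming from these four points by $1$, yielding $A\le\binom{|Y_1|}{2}\binom{|Y_3|}{2}\le s^4/4$, and symmetrically $B\le s^4/4$. Therefore $C-A-B\ge s^4/2\ge c_4^4 n^4/2$, and using $X\le\binom{n}{4}\le n^4/24$ the expected savings $(C-A-B)/2$ are at least $6c_4^4\,X$. The probabilistic method then furnishes a 2-colouring witnessing $\crs_2(D)/\crs(D)\le 1/2-6c_4^4$, and any positive $c$ strictly smaller than $6c_4^4$ gives the desired strict inequality.

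The main obstacle I expect is controlling $A$ and $B$. Theorem~\ref{thm:pos_es} only structures transversals containing exactly one point from each $Y_i$, whereas $A$ and $B$ are determined by four-tuples with two points drawn from a single $Y_i$, over which PFES provides no direct information. The elementary $K_4$ observation sidesteps this at the cost of a factor~$1/2$ in the savings, which is comfortably absorbed by the fact that every pair in $E_1\times E_2$ is genuinely a crossing; the preliminary trimming to equal sizes $s$ is also essential, since very unbalanced $|Y_i|$'s could in principle push $A+B$ above~$C$ and break the argument.
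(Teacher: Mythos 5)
Your proof is correct, but it takes a genuinely different and more elementary route than the paper. The paper applies Theorem~\ref{thm:pos_es} \emph{iteratively}: at each step it extracts a quadruple $(Y_1,Y_2,Y_3,Y_4)$ from the current vertex set $X_i$, records the edge families $F_i$ (between $Y_1$ and $Y_3$) and $F_i'$ (between $Y_2$ and $Y_4$), deletes $Y_1\cup Y_2$, and repeats until fewer than $n_0$ vertices remain; it then colors the leftover edges randomly and assigns each $F_i$ one color and $F_i'$ the other, choosing between the two assignments greedily so that crossings against previously colored edges are at most half monochromatic. Within each $F_i\cup F_i'$ it argues that at most a $\tfrac13$ fraction of crossings are monochromatic, and the iteration guarantees that these structured crossings form a $\Theta(c_4^3)$ fraction of $\binom{n}{4}$. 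You instead apply the theorem only once, color $E_1$ and $E_2$ deterministically and everything else uniformly at random, and control the within-class crossings $A$ and $B$ directly via the observation that four points in general position span at most one pair of disjoint crossing edges, giving $A,B\le\binom{s}{2}^2\le s^4/4 < C/2$ after trimming to equal sizes $s$. Your expectation computation and the resulting constant $c<6c_4^4$ are sound (the trimming step you flag is indeed necessary, since unbalanced $|Y_i|$ could make $A+B$ exceed $C$). What the paper's extra machinery buys is a quantitatively better constant, roughly $c_4^3/4$ versus your $6c_4^4$, because the iteration covers a constant fraction of all $\binom{n}{4}$ quadruples rather than only the $\sim c_4^4 n^4$ quadruples visible to a single application; what your argument buys is brevity, a self-contained bound on the monochromatic crossings inside each deterministic class (the paper's $\tfrac12 c_4^4|X_i|^4$ bound is the same observation, stated more tersely), and the avoidance of the nested sequence of drawings $D_{m+1}\subset\cdots\subset D_0$ and colorings. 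Since the theorem only asserts the existence of some positive $c$, both arguments establish it.
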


\begin{proof}
	Let $c_4$ be as in Theorem~\ref{thm:pos_es} and let $n_0$ be such that Theorem~\ref{thm:pos_es} 
	holds for $k=4$ and for point sets with at least $n_0$ points. Let $D$ be a straight-line drawing of $K_n$, where $n \ge n_0$. 
	
	Our general strategy is as following. 
	We first find subsets of edges of $D$ that can be $2$-colored such that 
	many of the crossings between these edges are between pairs of edges of different colors.
	We remove these edges and search for a subset of edges with the same property. We repeat this
	process as long as possible. We $2$-color the remaining edges so that at most
	half of the crossings are monochromatic. Afterwards, we put back the edges we removed
	while $2$-coloring them in a convenient way.

	We define a sequence of subsets $V=X_0 \supset X_1 \supset \cdots \supset X_m$  of vertices of $D$, 
	where $V=X_0$ is the set of vertices of $D$,  
	and tuples
	$(F_1, F_1'),\dots, (F_m,F_m')$ of sets of edges of $D$ as follows.
	Suppose that $X_i$ has been defined. If $|X_i| < n_0$, we stop the process. 
	Otherwise we apply Theorem~\ref{thm:pos_es} to $X_i$,  to obtain a tuple $(Y_1, Y_2, Y_3, Y_4)$ of disjoint
	subsets of points $X_i$, each with exactly $\left \lfloor c_4 |X_i| \right \rfloor$ vertices, 
	such that every transversal $(y_1, y_2, y_3, y_4)$ of $(Y_1, Y_2, Y_3, Y_4)$
	is a convex quadrilateral.  
	Without loss of generality we assume that $(y_1, y_2, y_3, y_4)$ appear
	in clockwise order around this quadrilateral. 
	This implies that the edge $(y_1, y_3)$ crosses the edge $(y_2, y_4)$.
	Let $F_i$ be the set of edges
	with an endpoint in $Y_1$ and an endpoint in $Y_3$; 
	let $F_i'$ be the set of edges
	with an endpoint in $Y_2$ and an endpoint in $Y_4$; 
	and finally, let $X_{i+1}=X_i\setminus(Y_1 \cup Y_2)$.
	Note that every edge in $F_i$ crosses every edge in $F_i'$.

	We now consider the remaining edges. 
	Let $\overline{F}$ be the set of edges of $D$ 
	that are not contained
	in any $F_i$ nor in any $F_i'$ for $1\leq i \leq m$. 
	Let $H$ be the straight-line drawing 
	with the same vertices as $D$ and with edge set equal to $\overline{F}$. 
	By a probabilistic argument 2-coloring the edges uniformly at random,
	there is a coloring  $\chi'$ of the edges of $H$ so that $\crs(H) / \crs_2(H,\chi') \ge 2$.
	
	We now $2$-color the edges in $F_i$ and $F_i'$.
	We define a sequence of straight-line drawings $H=D_{m+1}, \subset D_{m} \subset \cdots \subset D_{0}=D$ 
	and a corresponding
	sequence of $2$-edge-colorings $\chi'=\chi_{m+1},\chi_{m},\dots,\chi_0=\chi$ that satisfies the following.
	Each $\chi_i$ is a $2$-edge-coloring of $D_{i}$. 
	Also $\chi_{i-1}$ when restricted to $D_{i}$ equals
	$\chi_{i}$. 
	Suppose that $D_{i}$ and $\chi_i$ have been defined and that $0 < i \le m+1$.
	Let $D_{i-1}$ be the straight-line drawing with the same vertices as $D$ and with edge set $E_{i-1}$ equal to 
	$E_{i}\cup F_{i-1} \cup F_{i'-1}$ 
	(where $E_{i}$ is the edge set of $D_{i}$). 
	Since $\chi_{i-1}$ coincides with $\chi_i$ in the edges
	of $E_i$, we only need to specify the colors of $F_{i-1}$ and $F_{i-1}'$. We color the
	edges of $F_{i}$ with the same color and the edges of $F_{i-1}'$ with the other color.
	There are two options for doing this, and one of them guarantees that at most
	half of the crossings between an edge of $F_{i-1} \cup F_{i-1}'$ and an edge
	of $D_i$ are monochromatic. 
	We choose this option to define $\chi_{i-1}$. 
	
	In what follows we assume that $D$ has been colored by $\chi$.  
	Let $C$ be the set of pairs of 
	edges of $D$ that cross. 
	Of these, let $C_1$ be the subset
	of pairs of edges such that both of them are contained in $F_i \cup F_i'$ for some
	$1\leq i\leq m$. 
	Let $C_2:=C \setminus C_1$. 
	Note that, by construction of $\chi$, at most
	half of the pairs of edges in $C_2$ are of edges of the same color. For a given $i$, let $E'_i$ be
	the subset of pairs of edges in $C_1$ such that both edges are in $F_i \cup F_i'$. 
	Let $(Y_1, Y_2, Y_3, Y_4)$ be the tuple of disjoint subsets of points $X_i$ used to define
	$F_i$ and $F_i'$. Recall that each $Y_i$ consists of $\lfloor c_4 |X_i| \rfloor$ points. 
	Every pair
	of crossing edges defines a convex quadrilateral and, conversely, every convex quadrilateral defines
	a unique pair of crossing edges. 
	Therefore, by construction there at most ${c_4}^4 \lfloor |X_i| \rfloor^4/2$ pairs of
	edges in $E'_i$ such that both edges are of the same color; and there are exactly $\lfloor {c_4}|X_i| \rfloor^4$
	pairs of edges in $E'_i$ such that the edges are of different color.
	Thus, at most 
	$\frac{1}{3}$ of the pairs of edges in $E'_i$ are edges of the same color.

	Therefore, $\frac{\crs_2(D,\chi)}{\crs(D)}\le \frac{\frac{1}{2}|C_1|+\frac{1}{3}|C_2|}{|C_1|+|C_2|}$.
	This is maximized when $C_1$ is as large as possible. 
	Since there are in total at most $\binom{n}{4}$ 
	pairs of edges that cross, we have $|C_1| \le \binom{n}{4}-|C_2|.$ 
	Thus,
	$$\frac{\crs_2(D,\chi)}{\crs(D)}\le  \frac{\frac{1}{2} \binom{n}{4}-\frac{1}{6}|C_2|}{\binom{n}{4}}.$$
	
	We now obtain a lower bound for the size of $C_2$. 
	Note that $|X_0|=n$ and $|X_i| \ge (1-4c_4)|X_{i-1}|$.
	This implies that $|X_i| \ge (1-4c_4)^i n$ and that
	$|E_i| \ge {c_4}^4(1-4c_4)^{4i} n^4.$
	Therefore, 

	$$|C_2| =\sum_{i=1}^m |E_i| \geq \sum_{i=1}^m {c_4}^4(1-4c_4)^{4i} n^4 =24{c_4}^4 \left ( \frac{1}{1-(1-4c_4)^4}-1-o(1) \right) \binom{n}{4},$$
	which completes the proof.
\end{proof}

In Appendix~\ref{app:special_cases} we explore the ratio  ${\crs_2(D)}/{\crs(D)}$ for certain classes of straight-line drawings of $K_n$. 

\section{Conclusion and Open Problems}
\label{sec:open}

In this paper we have shown lower and upper bounds on the rectilinear 2-colored crossing number for $K_n$ as well as its relation to the  rectilinear crossing number for fixed drawings of~$K_n$. Besides improving the given bounds, some open problems arise from our work.
\begin{enumerate}[(1)]
	\item How fast can the best edge-coloring of a given straight-line drawing of $K_n$ be computed?
	This problem is related to the max-cut problem of segment intersection graphs, 
	which has been shown to be NP-complete for general graphs~\cite{amsv-mcsig-18}.
	But for the intersection graph of $K_n$ the algorithmic  complexity is still unknown.
	
	\item What can we say about the structure of 2-colored crossing minimal sets?
	For the rectilinear crossing number it is known that optimal sets have a triangular convex hull~\cite{aor-ssarc-06}.
	For $n=8,9$ we have optimal sets with 3 and 4 extreme points, but so far all minimal sets for $n \geq 10$ have a triangular convex hull.
	
	\item   We have seen that for convex sets asymptotically, the ratio $\crs_2(D) / \crs(D)$ approaches $3/8$ from below when $n \to \infty$. 
	It can be observed that among all point sets (order types) of size 10, the convex drawing $D$ of $K_{10}$ is the only one that provides the largest ratio of $\crs_2(D) / \crs(D)=2/7$, while the best factor $5/76$ is reached by sets minimizing $\crs_2(D)$.
	Is it true that the convex set has the worst (i.e., largest) factor? And is the best (smallest) factor always achieved by optimizing sets, that is, sets with $\crs_2(D)=\crs_2(K_n)$?
\end{enumerate}

\newpage

\bibliographystyle{splncs04}
\bibliography{refs_mono_crossings}

\newpage
\appendix

\section{Omitted Details for the Proof of Theorem~\ref{thm:dup_colored}}\label{app:dup_colored}
\subsection{Coloring Cases in the Duplication Process}\label{app:matchingcases}

Consider a point $p \in P$, its incident edges %
induced by $P$, their colors induced by $\chi$, and the \ccolor-halving edge $e=(p,q)$ 
that is matched with $p$ in the \ccolor-halving matching for $P$.

There are six cases in which $p$ can fall, depending on the color of $e$ and the numbers $L_l(p)$ and $L_r(p)$ of blue edges incident to $p$ on the left and the right side of the line $\ell_e$ spanned by~$e$; see again \figurename~\ref{fig:dup_cases}, where the larger color class is blue. In each case, the color of the edge~$p_1p_2$ and the \cccolor-halving matching edges for $p_1$ and $p_2$ need to be determined.  

In the first three cases, $e$ is in the smaller color class $\Sml{p}$ at $p$ while in the last three cases, $e$ is in the larger color class $\Lrg{p}$ at $p$. Note that in all cases, $\Ll{p}$ and $\Lr{p}$ differ by at most one. Further, as $|P|$ is even, the degree of $p$ is odd and hence $\Lrg{p}$ contains at least one more edge than $\Sml{p}$. Thus, no matter how we color the edge $(p_1,p_2)$ in  $Q$, the larger color class at $p_1$ and $p_2$ in $\chi'$ for $Q$ is the same as the one of $p$ in $\chi$.

\begin{description}
	\item[Case 1:] $e \in \Sml{p}$ and $\Ll{p} > \Lr{p}$.
	The edge $(p_1,p_2)$ is colored with the color of $\Lrg{p}$.
	In the matching for $Q$,
	$p_1$ is matched with $(p_1, q_1)$ and 
	$p_2$ is matched with $(p_2, q_2)$. 
	By this we obtain $\Ll{p_i} = 2\Ll{p}$ and $\Lr{p_i}=2\Lr{p}+1$ for $i \in \{1,2\}$,
	implying that the matched edges are indeed \cccolor-halving.
	Further, we have $\Sl{p_1} = 2\Sl{p}$, $\Sr{p_1} = 2\Sr{p}+1$, $\Sl{p_2} = 2\Sl{p}+1$, and $\Sr{p_2} = 2\Sr{p}$.
	\item[Case 2:] $e \in \Sml{p}$ and $\Ll{p} = \Lr{p}$. 
	The edge $(p_1,p_2)$ is colored with the color of $\Sml{p}$.
	In the matching for $Q$,
	$p_1$ is matched with $(p_1, p_2)$ and 
	$p_2$ is matched with $(p_2, q_2)$. 
	By this we obtain $\Ll{p_i} = 2\Ll{p}$ and $\Lr{p_i}=2\Lr{p}$ for $i \in \{1,2\}$,
	implying that the matched edges are \cccolor-halving.
	Further, $\Sl{p_i} = 2\Sl{p}+1$ and $\Sr{p_i}=2\Sr{p}+1$ for $i \in \{1,2\}$.
	\item[Case 3:] $e \in \Sml{p}$ and $\Ll{p} < \Lr{p}$. 
	The edge $(p_1,p_2)$ is colored with the color of $\Lrg{p}$.
	In the matching for $Q$,
	$p_1$ is matched with $(p_1, q_2)$ and 
	$p_2$ is matched with $(p_2, q_1)$. 
	By this we obtain $\Ll{p_i} = 2\Ll{p}+1$ and $\Lr{p_i}=2\Lr{p}$ for $i \in \{1,2\}$,
	implying that the matched edges are \cccolor-halving.
	Further, $\Sl{p_1} = 2\Sl{p}+1$, $\Sr{p_1} = 2\Sr{p}$, $\Sl{p_2} = 2\Sl{p}$, and $\Sr{p_2} = 2\Sr{p}+1$.
	\item[Case 4:] $e \in \Lrg{p}$ and $\Ll{p} > \Lr{p}$. 
	The edge $(p_1,p_2)$ is colored with the color of $\Sml{p}$.
	In the matching for $Q$,
	$p_1$ is matched with $(p_1, q_1)$ and 
	$p_2$ is matched with $(p_2, q_1)$. 
	By this we obtain $\Ll{p_i} = 2\Ll{p}$ and $\Lr{p_i}=2\Lr{p}+1$ for $i \in \{1,2\}$,
	implying that the matched edges are indeed \cccolor-halving.
	Further, we have $\Sl{p_1} = 2\Sl{p}$, $\Sr{p_1} = 2\Sr{p}+1$, $\Sl{p_2} = 2\Sl{p}+1$, and $\Sr{p_2} = 2\Sr{p}$.
	\item[Case 5:] $e \in \Lrg{p}$ and $\Ll{p} = \Lr{p}$. 
	The edge $(p_1,p_2)$ is colored with the color of $\Sml{p}$.
	In the matching for $Q$,
	$p_1$ is matched with $(p_1, p_2)$ and 
	$p_2$ is matched with $(p_2, q_1)$. 
	By this we obtain $\Ll{p_1} = 2\Ll{p}+1$, $\Lr{p_1} = 2\Lr{p}+1$, $\Ll{p_2}=2\Ll{p}$, and $\Lr{p_2}=2\Lr{p}+1$. 
	Hence the matched edges are \cccolor-halving.
	Further, we have  $\Sl{p_1} = 2\Sl{p}$, $\Sr{p_1} = 2\Sr{p}$, $\Sl{p_2} = 2\Sl{p}+1$, and $\Sr{p_2} = 2\Sr{p}$.
	\item[Case 6:] $e \in \Lrg{p}$ and $\Ll{p} < \Lr{p}$. 
	The edge $(p_1,p_2)$ is colored with the color of $\Sml{p}$.
	In the matching for $Q$,
	$p_1$ is matched with $(p_1, q_2)$ and 
	$p_2$ is matched with $(p_2, q_2)$. 
	By this we obtain $\Ll{p_i} = 2\Ll{p}+1$ and $\Lr{p_i}=2\Lr{p}$ for $i \in \{1,2\}$,
	implying that the matched edges are indeed \cccolor-halving.
	Further, we have $\Sl{p_1} = 2\Sl{p}+1$, $\Sr{p_1} = 2\Sr{p}$, $\Sl{p_2} = 2\Sl{p}$, and $\Sr{p_2} = 2\Sr{p}+1$.
\end{description}

\subsection{Proof of Claim~\ref{claim:one_dup}}\label{app:onedup}

\oneDup*

\begin{proof}
	We count the crossings in the same way as in the proof of Lemma 3 of \cite{dup1}. 
	A crossing in $Q$ comes from
	four points in convex position. 
	We classify the crossings in three types, according to the number of points in $P$
	that originated them (see \figurename~\ref{fig:crossing_types}). 
	
		\begin{figure}[tb]
			\centering
			\includegraphics[scale=0.95]{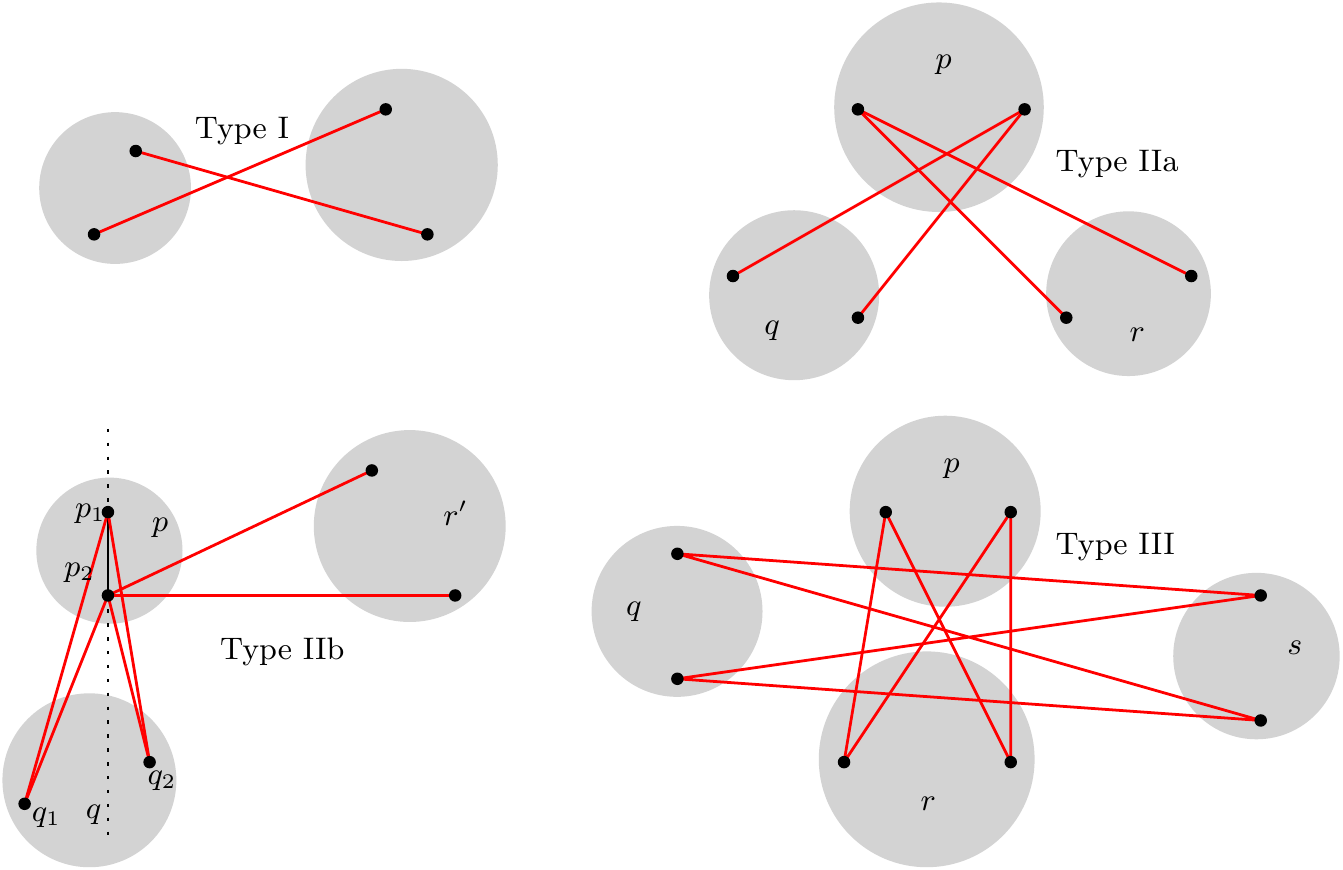}
			\caption{Counting the crossings of different types in the duplication process. %
			}\label{fig:crossing_types}
		\end{figure}
		
	\begin{description}%
		\item[Type I:] 
		The points come from two points in $P$. 
		There are $\binom{m}{2}$ ways of choosing a pair of points in $P$, 
		and every such pair determines a crossing in $Q$ unless the edge between them is a %
		matching edge. Since we have $m$ matching %
		edges, there are $\binom{m}{2}-m$ crossings of this type.
		\item[Type IIa:] 
		The points come from three points $p, q$, and $r$ in $P$ and 
		none of the edges between those points is a matching edge. %
		Without loss of generality, $p_1$ and $p_2$ are involved in the crossing. 
		Then $q$ and $r$ lie on the same side of the line spanned by the matching %
		edge $e$ of $p$ and both $(p, r)$ and $(p, q)$ have the same color as $e$. 
		Any pair $(r,q)$ of points of $P$ that satisfies those conditions with respect to $p$ 
		generates four crossings in $Q$. 
		Thus, the number of Type~IIa crossings for $p$ is 
		$4 \left[ \binom{\Ll{p}}{2} + \binom{\Lr{p}}{2} + \binom{\Sl{p}}{2} + \binom{\Sr{p}}{2} \right]$.
		\item[Type IIb:] 
		The points come from three points $p, q$, and $r$ in $P$ and 
		one of the edges between those points is a matching edge. %
		Without loss of generality, assume $(p, q)$ is the matching edge of $p$. %
		Any pair of points that originated from a point $r \in P$ such that $(p, r)$ 
		has the same color as $(p, q)$ generates two crossings with either $(p_1, q_1)$ or $(p_1, q_2)$.
		Thus, the number of Type~IIb crossings for $p$ is 
		$2(\Hl{p} + \Hr{p})$.
		\item[Type III:] 
		The points come from four points $p, q$, $r$, and $s$ in $P$. 
		Then those four points generate a crossing in $P$. 
		There are $\crs(P, \chi)$ such quadruples of points, and each one generates 16 crossings in~$Q$. 
		Thus, the number of Type~III crossings in $Q$ is $16\crs(P, \chi)$.
	\end{description}
	Summing the Type~II crossings over each point $p$ of $P$ and adding them to the crossings of Type~III and Type~I gives the claimed result. %
\end{proof}

\subsection{Proof of Claim~\ref{claim:k_dups}}\label{app:kdups}

\KDups*

\begin{proof}
	Let $p$ be a point of $P$. 
	We iteratively construct a rooted binary tree $T(p)$ of height $k$ 
	containing a vertex for each point $q$ of $Q_i$ that stems from duplicating $p$ 
	in the following way. 
	The root of $T(p)$ contains the tuple $(\Ll{p}, \Lr{p}, \Sl{p}, \Sr{p})$ representing~$p$.
	For vertex $v$ in $T(p)$ that represents a point $q$ of $Q_i$ with $0 \leq i \leq k-1$, 
	its left child contains the tuple $(\Ll{q_1}, \Lr{q_1}, \Sl{q_1}, \Sr{q_1})$
	and its right child contains the tuple $(\Ll{q_2}, \Lr{q_2}, \Sl{q_2}, \Sr{q_2})$,
	where $q_1,q_2 \in Q_{i+1}$ are the duplicates of $q$. 
	In addition, we mark whether the matching edge of $p$ (and hence the ones of all points originating from $p$) 
	is of the larger or the smaller color class at $p$.

	We next elaborate on the exact content of the tuple stored in the $j$-th vertex of 
	the $i$-th level of $T(P)$ with $j \in \{1, \ldots, 2^i\}$, 
	depending on the case to be applied for $p$ in the duplication process.  
	
	\begin{description}%
		\item[Cases 1, 3, 4 and 6:]
		Let $p$ be a point in $P$ that falls in Case 1. 
		Then in the $i$-th level of $T(p)$, the $j$-th vertex contains the tuple 
		\[ (2^i\Ll{p}, 2^i\Lr{p}+2^i-1, 2^i\Sl{p}+j-1, 2^i\Sr{p}+2^i-j). \]
		We show this by induction on $i$. 
		It follows directly from the duplication process that this happens when $i=1$. Suppose that $i > 1$. 
		From the induction hypothesis, the $j$-th vertex $v$ of level $i$ contains the tuple 
		$(2^i\Ll{p}, 2^i\Lr{p}+2^i-1, 2^i\Sl{p}+j-1, 2^i\Sl{p}+2^i-j)$.
		Since all the vertices of $T(p)$ represent points that fall in Case 1, 
		the left and right children of $v$ contain the tuples
		\[ (2^{i+1}\Ll{p}, 2^{i+1}\Lr{p}+2^{i+1}-1, 2^{i+1}\Sl{p}+(2j-1)-1, 2^{i+1}\Sr{p}+2^{i+1}-(2j-1)) \]
		and
		\[ (2^{i+1}\Ll{p}, 2^{i+1}\Lr{p}+2^{i+1}-1, 2^{i+1}\Sl{p}+(2j-1), 2^{i+1}\Sr{p}+2^{i+1}-2j), \]
		respectively. 
		These two vertices are precisely the $(2j\!-\!1)$-st and the $2j$-th vertex in level $i+1$.
		
		Note that, if $p$ falls in Case 4, $T(p)$ has the exact same structure as a point of Case 1. 
		Furthermore, if $p$ is a point that falls in Case 3 or Case 6, the structure of $T(p)$ is 
		exactly a mirrored version of the tree from a point that falls in Case 1. %
		
		\item[Case 2:]
		Let $p$ be a point in $P$ that falls in Case 2. 
		Then in the $i$-th level of $T(p)$, the $j$-th vertex contains the tuple 
		\[ (2^i\Ll{p}, 2^i\Lr{p}, 2^i\Sl{p}+2^i-1, 2^i\Sr{p}+2^i-1). \] 
		We again proceed by induction on $i$. 
		It follows directly from the duplication process that this happens when $i=1$, so suppose that $i > 1$. 
		From the induction hypothesis, the $j$-th vertex $v$ of level $i$ contains the tuple 
		$(2^i\Ll{p}, 2^i\Lr{p}+2^i-1, 2^i\Sl{p}+j, 2^j\Sr{p}+2^j-i)$.
		Since all the vertices of $T(p)$ represent points that fall in Case 2, 
		the left and right children of $v$ contain the tuple
		\[ (2^{i+1}\Ll{p}, 2^{i+1}\Lr{p}, 2^{i+1}\Sl{p}+2^{i+1}-1, 2^{i+1}\Sr{p}+2^{i+1}-1) \]
	\end{description}%
	
	\noindent Note that %
	$T(p)$, together with the information whether the matching edges are of the smaller or the larger color class,
	contains all the information needed to compute 
	the crossings of Type~II in $Q_{i+1}$ that involve points which originate from $p$. 
	
	Using the above observations we can now determine $\crs_2(Q_k,\chi_k)$. 
	We will use the following notation: $\Fi{x} = \binom{2^ix}{2}, \Gi{x} = \binom{2^ix+2^i-1}{2}$, $\Hi{x} = \binom{2^ix+j}{2}$, and $P_c$ is the subset of $P$ of points that fall in Case $c$.
	
	\begin{description}
		\item[Type III:] Each crossing of Type III in $P$ generates 16 crossings in $Q$. Iterating this process $k$ times, we obtain \[ 16^k\ \crs_2(P,\chi)\]
		crossings in $Q_k$.
		\item[Type I:] Every set $Q_i$ has a $\chi_i$-halving matching and $|Q_i|=2^im$, thus, there are $\binom{2^im}{2}-2^im$ crossings of Type I in $Q_{i+1}$. Moreover,
		each of these crossings becomes a Type III crossing in further duplication steps,
		that is, it produces 16 crossings per each further duplication step. 
		Hence, adding the crossings of Type I that we get at each iteration and the according crossings of Type III
		that they generate later, 
		we obtain \[\sum_{i=0}^{k-1} 16^{k-i-1} \left[ \binom{2^im}{2} - 2^im  \right]\] crossings in $Q_k$.
		\item[Type II for Case 2:] 
		Consider a point $p \in P$ that falls in Case 2, together with all points in $Q_i$ that originate from it (and hence fall in Case 2 as well).
		Using Claim~\ref{claim:one_dup}, and the information from the $i$-th level of $T(p)$, we obtain that $Q_{i+1}$ has
		\begin{align*}
		& 4\cdot 2^i \left[\Fi{\Ll{p}} + \Fi{\Lr{p}} + \Gi{\Sl{p}} + \Gi{\Sr{p}}\right] \\
		+\ & 2 \cdot 2^i \left[2^i(\Hl{p}+\Hr{p}) + 2^{i+1} -2\right]
		\end{align*}
		crossings of Type II that come from all points in $Q_i$ originating from $p$. %
		Moreover, each of these crossings becomes a Type III crossing for all further duplication steps.
		Hence, %
		adding the crossings of Type II that we count for points originating from $p$ 
		at each iteration and the according crossings of Type III
		that they generate later, we obtain
		\begin{align*}
		& 4 \sumS 2^i \left[ \Fi{\Ll{p}} + \Fi{\Lr{p}} + \Gi{\Sl{p}} + \Gi{\Sr{p}} \right]  \\
		+\ & 2 \sumS 2^i \left[ 2^i(\Hl{p}+\Hr{p}) + 2^{i+1} -2 \right] 
		\end{align*}
		crossings in $Q_k$.
		\item[Type II for Cases 1 and 3:] 
		Consider a point $p \in P$ that falls in Case 1 or 3, and with all points in $Q_i$ that originate from it (and hence fall in Case 1 or Case 3 as well).
		Using Claim~\ref{claim:one_dup}, and the information from the $i$-th level of $T(p)$, we obtain that $Q_{i+1}$ has
		\begin{align*}
		& 4 \cdot 2^i \left[ \Fi{\Ll{p}}+ \Gi{\Lr{p}}  \right] \\
		+\ & 4 \sum_{j=0}^{2^i-1}  \left[\Hi{\Sl{p}}+\Hi{\Sr{p}}\right] \\
		+\ & 2 \sum_{j=0}^{2^i-1} \left[2^i(\Hl{p}+\Hr{p}) + 2j\right] 
		\end{align*}
		crossings of Type II that come from all points in $Q_i$ in the tree $T(p)$. 
		Again, each of these crossings becomes a Type III crossing for all further duplication steps.
		Hence, %
		adding the crossings of Type II that we we count for points originating from $p$
		at each iteration and the according crossings of Type III
		that they generate later, we obtain
		\begin{align*}
		& 4 \sumS 2^i \left[ \Fi{\Ll{p}}+ \Gi{\Lr{p}}  \right] \\
		+\ & 4 \sumS \sum_{j=0}^{2^i-1}  \left[\Hi{\Sl{p}}+\Hi{\Sr{p}} \right] \\
		+\ & 2 \sumS \sum_{j=0}^{2^i-1} \left[2^i(\Hl{p}+\Hr{p}) + 2j \right]
		\end{align*}
		crossings in $Q_k$.
		\item[Type II for Cases 4 and 6:] 
		Consider a point $p \in P$ that falls in Case 4 or 6, and with all points in $Q_i$ that originate from it (and hence fall in Case 4 or Case 6 as well).
		Using Claim~\ref{claim:one_dup}, and the information from the $i$-th level of $T(p)$, we obtain that $Q_{i+1}$ has
		\begin{align*}
		& 4 \cdot 2^i \left[ \Fi{\Ll{p}}+ \Gi{\Lr{p}} \right] \\
		+\ & 4 \sum_{j=0}^{2^i-1}  \left[\Hi{\Sl{p}}+\Hi{\Sr{p}} \right] \\
		+\ & 2 \cdot 2^i \left[2^i(\Hl{p}+\Hr{p}) + 2^{i} -1 \right] 
		\end{align*}
		crossings of Type II that come from all points in $Q_i$ in the tree $T(p)$. 
		Again, each of these crossings becomes a Type III crossing for all further duplication steps.
		Hence, %
		adding the crossings of Type II that we we count for points originating from $p$
		at each iteration and the according crossings of Type III
		that they generate later, we obtain
		\begin{align*}
		& 4 \sumS 2^i \left[ \Fi{\Ll{p}}+ \Gi{\Lr{p}} \right] \\
		+\ & 4 \sumS \sum_{j=0}^{2^i-1}  \left[\Hi{\Sl{p}}+\Hi{\Sr{p}} \right] \\
		+\ & 2 \sumS 2^i \left[2^i(\Hl{p}+\Hr{p}) + 2^{i} -1 \right] \\  
		\end{align*}
		crossings in $Q_k$.
	\end{description}

	\noindent %
	Adding the number of crossings of each type, we obtain the following expression for $\crs(Q_k, \chi_k)$, the number of monochromatic crossings in the straight-line 2-edge-colored drawing of $K_n$ induced by $Q_k$ and $\chi_k$:
	\begin{align} 
	\crs_2(Q_k,\chi_k) =\ &16^k\ \crs_2(P,\chi) + \sum_{i=0}^{k-1} 16^{k-i-1} \left[ \Fi{m} - 2^im  \right] \label{easyTerms} \\
	&+ \sum_{p \in P_2} \left[ 4 \sumS 2^i \left[ \Fi{\Ll{p}} + \Fi{\Lr{p}} + \Gi{\Sl{p}} + \Gi{\Sr{p}} \right]  \right. \label{case2Terms} \\
	&\hspace{7ex}+ \left. 2 \sumS 2^i [2^i(\Hl{p}+\Hr{p}) + 2^{i+1} -2] \right] \nonumber \\  %
	&+ \sum_{p \in P_1 \cup P_3} \left[ 4 \sumS 2^i [ \Fi{\Ll{p}}+ \Gi{\Lr{p}}  ] \right. \label{case4Terms} \\
	&\hspace{7ex}+ 4 \sumS \sum_{j=0}^{2^i-1}  [\Hi{\Sl{p}}+\Hi{\Sr{p}}] \nonumber \\
	&\hspace{7ex}+ \left. 2 \sumS \sum_{j=0}^{2^i-1} [2^i(\Hl{p}+\Hr{p}) + 2j] \right] \nonumber \\  %
	&+ \sum_{p \in P_4 \cup P_6} \left[ 4 \sumS 2^i [ \Fi{\Ll{p}}+ \Gi{\Lr{p}}  ] \right. \label{case6Terms} \\
	&\hspace{7ex}+ 4 \sumS \sum_{j=0}^{2^i-1}  [\Hi{\Sl{p}}+\Hi{\Sr{p}}] \nonumber \\
	&\hspace{7ex}+ \left. 2 \sumS 2^i [2^i(\Hl{p}+\Hr{p}) + 2^{i} -1] \right]. \nonumber  %
	\end{align}
	
	\noindent What remains to be shown is that this sum can be written 
	as $A\cdot2^{4k} + B\cdot2^{3k} + C\cdot2^{2k} + D\cdot2^{k}$, where $A, B, C$, and $D$ depend on $\Ll{p}, \Lr{p}, \Sl{p}, \Sr{p}, \Hl{p}$ and $\Hr{p}$ for every point $p$ in $P$. We will use the following observations:
	
	\begin{observation} \label{obs:fi}
		\begin{align*}
		\sum_{i=0}^{k-1} 16^{k-i-1}2^i f_i(x) &= \frac{2^{4k}}{2^4}\sum_{i=0}^{k-1} \frac{1}{2^{3i}} 2^{i-1}x(2^ix-1) \\
		&= \frac{2^{4k}}{2^4}\sum_{i=0}^{k-1} \frac{x^2}{2\cdot 2^i} - \frac{x}{2\cdot 2^{2i}} \\
		&= \frac{2^{4k}}{2^4} \left[ \frac{3x^2-2x}{3} - \frac{x^2}{2^k} + \frac{2x}{3\cdot 2^{2k}} \right] \\
		&= \frac{3x^2-2x}{48}\ptkd{4k} - \frac{x^2}{16} \ptkd{3k} + \frac{x}{24} \ptkd{2k}.
		\end{align*}
	\end{observation}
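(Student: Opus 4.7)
The identity to be proved is a purely computational chain of four equalities, with no external ingredient needed, so my plan is to verify each equality in turn by routine algebra on powers of $2$.

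First I would unfold the definition $f_i(x)=\binom{2^ix}{2}=2^{i-1}x(2^ix-1)$ and observe the key algebraic fact $16^{k-i-1}\cdot 2^i = 2^{4k-3i-4}=\tfrac{2^{4k}}{2^4}\cdot 2^{-3i}$. Pulling the factor $\tfrac{2^{4k}}{2^4}$ outside the summation produces the first equality. Expanding the numerator as $2^{2i}x^2-2^ix$ and dividing by $2^{3i+1}$ regroups each summand as $\tfrac{x^2}{2\cdot 2^i}-\tfrac{x}{2\cdot 2^{2i}}$, which is exactly the second equality.

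The third line is just the evaluation of two finite geometric sums. Using $\sum_{i=0}^{k-1}2^{-i}=2-2^{1-k}$ and $\sum_{i=0}^{k-1}4^{-i}=\tfrac{4}{3}(1-4^{-k})$, the $x^2$ part contributes $x^2-x^2/2^k$ and the $x$ part contributes $\tfrac{2x}{3}-\tfrac{2x}{3\cdot 2^{2k}}$; subtracting and grouping the constant-in-$k$ terms yields $\tfrac{3x^2-2x}{3}-\tfrac{x^2}{2^k}+\tfrac{2x}{3\cdot 2^{2k}}$. The last equality is obtained by distributing the outer factor $\tfrac{2^{4k}}{16}$ through the three terms inside the brackets: the constant term contributes $\tfrac{3x^2-2x}{48}\cdot 2^{4k}$, the $2^{-k}$ term produces a $2^{3k}$ coefficient, and the $2^{-2k}$ term produces a $2^{2k}$ coefficient, matching the stated closed form.

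There is no genuine obstacle; the whole argument is bookkeeping of powers of $2$, and the only risk is dropping a factor during the geometric-series step. It is worth noting, however, that this observation is exactly of the form needed by Claim~\ref{claim:k_dups}: each contribution is a polynomial in $2^k$ with exponents at most $4k$, so that summing this template over the various $\Fi{\cdot}$, $\Gi{\cdot}$, $\Hi{\cdot}$ terms in the expression for $\crs_2(Q_k,\chi_k)$ will automatically yield an expression of the desired shape $A\cdot 2^{4k}+B\cdot 2^{3k}+C\cdot 2^{2k}+D\cdot 2^{k}$. Presenting the calculation with the prefactor $2^{4k}/2^4$ kept explicit throughout makes this reusability transparent for the analogous observations about $\Gi{\cdot}$ and $\Hi{\cdot}$ that must follow.
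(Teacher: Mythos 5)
Your verification is correct and follows exactly the same route as the paper, which presents this observation as the self-contained chain of equalities you checked (unfold $f_i(x)=\binom{2^ix}{2}$, factor out $2^{4k}/2^4$, evaluate the two geometric sums, redistribute the prefactor). Nothing is missing; your closing remark about how the $2^{4k},2^{3k},2^{2k}$ template feeds into Claim~\ref{claim:k_dups} matches the role the observation plays in the paper.
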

	\begin{observation} \label{obs:gi}
		\begin{align*}
		\sum_{i=0}^{k-1} 16^{k-i-1}2^i g_i(x) &= \frac{2^{4k}}{2^4}\sum_{i=0}^{k-1} \frac{1}{2^{3i}} (2^ix+2^i-1)(2^{i-1}x+2^{i-1}-1) \\
		&= \frac{2^{4k}}{2^4} \sum_{i=0}^{k-1} \frac{(x+1)^2}{2\ptkd{i}} - \frac{3(x+1)}{2\ptkd{2i}} + \frac{1}{\ptk{3i}} \\
		&= \frac{2^{4k}}{2^4} \left[ \frac{7x^2+1}{7} - \frac{(x+1)^2}{\ptk{k}} + \frac{2(x+1)}{\ptk{2k}} - \frac{8}{7\ptkd{3k}}\right] \\
		&= \frac{7x^2+1}{112}\ptkd{4k} - \frac{(x+1)^2}{16}\ptkd{3k} + \frac{x+1}{8}\ptkd{2k} - \frac{1}{14} \ptkd{k}.
		\end{align*}
	\end{observation}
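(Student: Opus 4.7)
The plan is a direct computation, parallel in structure to Observation~\ref{obs:fi}: expand $g_i(x) = \binom{2^i x + 2^i - 1}{2}$ into a polynomial in $2^i$ (with coefficients depending on $x$), combine with the external factor $16^{k-i-1}\cdot 2^i = 2^{4k}/(2^4\cdot 2^{3i})$, and reduce everything to three geometric series with common ratios $1/2$, $1/4$, $1/8$.

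First I would perform the substitution $a_i := 2^i(x+1)$ to obtain
\begin{equation*}
g_i(x) \;=\; \frac{(a_i - 1)(a_i - 2)}{2} \;=\; \frac{2^{2i}(x+1)^2 - 3\cdot 2^i (x+1) + 2}{2}.
\end{equation*}
Multiplying by $2^i\cdot 16^{k-i-1}$ and pulling out the common factor $2^{4k}/2^4$ then yields the intermediate line in the statement, namely
\begin{equation*}
\sum_{i=0}^{k-1} 16^{k-i-1}\,2^i g_i(x) \;=\; \frac{2^{4k}}{2^4}\sum_{i=0}^{k-1}\left[\frac{(x+1)^2}{2\cdot 2^i} - \frac{3(x+1)}{2\cdot 2^{2i}} + \frac{1}{2^{3i}}\right].
\end{equation*}

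Next I would apply the finite geometric series formula $\sum_{i=0}^{k-1} r^i = (1-r^k)/(1-r)$ with $r = 1/2,\,1/4,\,1/8$, giving $2 - 2^{1-k}$, $\tfrac{4}{3} - \tfrac{4}{3}\cdot 2^{-2k}$, and $\tfrac{8}{7} - \tfrac{8}{7}\cdot 2^{-3k}$, respectively. Collecting the $k$-independent terms yields the constant $(x+1)^2 - 2(x+1) + \tfrac{8}{7} = x^2 - 1 + \tfrac{8}{7} = (7x^2+1)/7$, while the decaying terms align cleanly with the inverse powers $2^{-k}$, $2^{-2k}$, $2^{-3k}$. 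Multiplying back through by $2^{4k}/16$ converts this into the four coefficients $\tfrac{7x^2+1}{112}$, $-\tfrac{(x+1)^2}{16}$, $\tfrac{x+1}{8}$, $-\tfrac{1}{14}$ of $2^{4k},\ 2^{3k},\ 2^{2k},\ 2^{k}$, which is precisely the claimed identity.

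There is no genuine obstacle; the derivation is mechanical algebra. The one place to double-check is the factor-of-two bookkeeping arising from $(2^i x + 2^i - 2) = 2(2^{i-1}x + 2^{i-1} - 1)$, which is the form used in the first equality of the statement; whether one proceeds via this factorization or via the direct quadratic expansion in $a_i$, the three geometric sums telescope to the same four-term expression in $2^k$.
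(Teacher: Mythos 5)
Your computation is correct and matches the paper's own derivation, which is exactly the displayed chain of equalities: expand $g_i(x)=\binom{2^ix+2^i-1}{2}$ as a quadratic in $2^i(x+1)$, absorb $16^{k-i-1}2^i$ into $2^{4k}/(2^4\cdot 2^{3i})$, and evaluate the three geometric series with ratios $1/2$, $1/4$, $1/8$. All four coefficients check out, so there is nothing to add.
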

	\begin{observation} \label{obs:hij}
		\begin{align*}
		\sum_{j=0}^{2^i-1} \Hi{x} &= \frac{1}{2} \sum_{j=0}^{2^i-1} \ptk{2i}x^2+\ptk{i}x(2j-1)+j(j-1) \\
		&= \frac{1}{2} \left[ \ptk{i}\ptkd{2i}x^2 + \ptk{i}x(\ptk{2i}-\ptk{i+1})+\frac{\ptk{i}(\ptk{i}-1)(\ptk{i}-2)}{3} \right] \\
		&= \frac{3(x^2+x)+1}{6}\ptkd{3i} - \frac{2x+1}{2}\ptkd{2i} + \frac{1}{3}\ptkd{i}.
		\end{align*}
	\end{observation}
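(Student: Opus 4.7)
My plan is to verify the identity
\[ \sum_{j=0}^{2^i-1} \binom{2^ix+j}{2} \;=\; \frac{3(x^2+x)+1}{6}\cdot 2^{3i} \;-\; \frac{2x+1}{2}\cdot 2^{2i} \;+\; \frac{1}{3}\cdot 2^i \]
by direct expansion. First I would write $\binom{2^ix+j}{2} = \tfrac{1}{2}(2^ix+j)(2^ix+j-1)$ and use the algebraic identity $(a+j)(a+j-1) = a^2 + (2j-1)a + j(j-1)$ with $a=2^ix$. This decouples the $x$- and $j$-dependence and reduces the task to summing three elementary expressions in $j$ over $\{0,1,\ldots,2^i-1\}$, which is exactly the step that appears in the first line of the observation's display.

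Second I would apply standard closed forms: with $n := 2^i$, one has $\sum_{j=0}^{n-1}1 = n$, $\sum_{j=0}^{n-1}(2j-1) = n(n-1)-n = n^2-2n$, and $\sum_{j=0}^{n-1}j(j-1) = 2\binom{n}{3} = \tfrac{n(n-1)(n-2)}{3}$. Substituting $n=2^i$ produces $\sum(2j-1) = 2^{2i}-2^{i+1}$ and $\sum j(j-1) = \tfrac{2^i(2^i-1)(2^i-2)}{3}$. Expanding the cubic numerator via $(2^i-1)(2^i-2) = 2^{2i} - 3\cdot 2^i + 2$ yields a clean polynomial in $2^i$, $2^{2i}$, $2^{3i}$.

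Finally I would plug these back and collect terms by the monomials $2^{3i}$, $2^{2i}$, and $2^i$. The $x^2$ mass sits entirely in $2^{3i}$; the $x$ mass splits between $2^{3i}$ (from the $2^{2i}x^2$-term multiplied by $n=2^i$ in disguise) and $2^{2i}$ (via the $-2^{i+1}x$ contribution from $\sum(2j-1)$); and the $x$-free part distributes over all three powers coming from $\tfrac{1}{3}(2^{3i}-3\cdot 2^{2i}+2\cdot 2^i)$. Carrying the overall factor $\tfrac{1}{2}$ through gives precisely $\tfrac{3(x^2+x)+1}{6}$, $-\tfrac{2x+1}{2}$, and $\tfrac{1}{3}$ as the coefficients. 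There is no conceptual obstacle here; the only place care is required is the bookkeeping of the $2^{2i}$-coefficient, since both the middle sum (contributing $-2^{i+1}x$) and the constant sum (contributing $-3\cdot 2^{2i}$) add negative mass at that order, and the extracted factor of $\tfrac{1}{6}$ must be tracked consistently when combining them.
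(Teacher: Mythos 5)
Your proposal is correct and follows exactly the derivation that the paper itself displays in Observation~\ref{obs:hij}: expand $\binom{2^ix+j}{2}$ via $(a+j)(a+j-1)=a^2+(2j-1)a+j(j-1)$, apply the elementary closed forms for $\sum 1$, $\sum(2j-1)$, and $\sum j(j-1)=2\binom{n}{3}$ with $n=2^i$, and collect coefficients of $2^{3i},2^{2i},2^i$. (One small slip in your prose: the linear-in-$x$ contribution to the $2^{3i}$ coefficient comes from the middle term $2^ix\cdot 2^{2i}$, not from the $x^2$ term; your final coefficients are nonetheless the right ones.)
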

	\begin{observation} \label{obs:Sum_hij}
		\begin{align*}
		\sum_{i=0}^{k-1} 16^{k-i-1}  & \sum_{j=0}^{2^i-1} \Hi{x}  \\
		&= \frac{2^{4k}}{2^4} \sum_{i=0}^{k-1} \frac{1}{2^{4i}} \left[ \frac{3(x^2+x)+1}{6}\ptkd{3i} - \frac{2x+1}{2}\ptkd{2i} + \frac{1}{3}\ptkd{i} \right] \\
		&= \frac{2^{4k}}{2^4} \sum_{i=0}^{k-1} \frac{1}{2^{4i}} \left[ \frac{3(x^2+x)+1}{6\ptkd{i}} - \frac{2x+1}{2\ptkd{2i}} + \frac{1}{3\ptkd{3i}} \right] \\
		&= \frac{2^{4k}}{2^4} \left[ \frac{3(x^2+x)+1}{48}(\ptk{4k}-\ptk{3k}) - \frac{2x+1}{24}(\ptk{4k}-\ptk{2k}) + \frac{1}{42}(\ptk{4k}-\ptk{k}) \right] \\
		&= \frac{21x^2-7x+1}{336}\ptkd{4k} - \frac{3(x^2+x)+1}{48}\ptkd{3k} + \frac{2x+1}{24}\ptkd{2k} - \frac{1}{42}\ptkd{k}.
		\end{align*}
	\end{observation}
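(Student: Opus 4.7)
The plan is to reduce this double-sum identity to a routine geometric-series calculation, invoking Observation~\ref{obs:hij} as a black box to dispose of the inner $j$-sum before tackling the outer $i$-sum.

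First I would substitute the closed-form evaluation
\[\sum_{j=0}^{2^i-1} h_{i,j}(x) \;=\; \frac{3(x^2+x)+1}{6}\cdot 2^{3i} \,-\, \frac{2x+1}{2}\cdot 2^{2i} \,+\, \frac{1}{3}\cdot 2^i\]
from Observation~\ref{obs:hij} into the outer sum. This reduces the double sum to a single sum in $i$ of three terms of the form $16^{k-i-1}\cdot c_j\cdot 2^{ji}$ with $j\in\{1,2,3\}$ and rational constants $c_j$ depending on $x$.

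Next I would pull out the common prefactor $2^{4k}/16$ by rewriting $16^{k-i-1}\cdot 2^{ji} = (2^{4k}/16)\cdot 2^{-(4-j)i}$. This turns the remaining $i$-sum into three finite geometric series with ratios $1/2$, $1/4$, and $1/8$, each admitting the standard closed form $\sum_{i=0}^{k-1} r^i = (1-r^k)/(1-r)$. The constant parts $1/(1-r)$ together contribute to the coefficient of $2^{4k}$, while the three $-r^k/(1-r)$ pieces---after multiplication by $2^{4k}/16$---yield terms proportional to $2^{3k}$, $2^{2k}$, and $2^{k}$, respectively.

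The main obstacle is not conceptual but arithmetical bookkeeping: after distributing the prefactor and collecting by powers of $2$ in $k$, one must verify that the coefficients of $2^{4k}$, $2^{3k}$, $2^{2k}$, $2^{k}$ simplify to exactly $\frac{21x^2-7x+1}{336}$, $-\frac{3(x^2+x)+1}{48}$, $\frac{2x+1}{24}$, and $-\frac{1}{42}$. Only the coefficient of $2^{4k}$ requires combining all three contributions (leading to a common denominator of $\operatorname{lcm}(48,24,42)=336$), whereas each of the lower-order coefficients comes from exactly one term of the decomposition, so the verification reduces to three immediate evaluations plus one fraction addition.
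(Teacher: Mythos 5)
Your proposal is correct and follows essentially the same route as the paper's own derivation (which is just the displayed chain of equalities): substitute Observation~\ref{obs:hij} for the inner sum, rewrite $16^{k-i-1}\cdot 2^{ji}$ as $(2^{4k}/16)\cdot 2^{-(4-j)i}$, evaluate the three finite geometric series with ratios $1/2$, $1/4$, $1/8$, and collect by powers of $2^k$, with $\operatorname{lcm}(48,24,42)=336$ giving the $2^{4k}$ coefficient $\frac{21x^2-7x+1}{336}$. Your bookkeeping checks out, so nothing further is needed.
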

	
	\noindent We show that \eqref{easyTerms}, \eqref{case2Terms}, \eqref{case4Terms} and \eqref{case6Terms} can be written as 
	\begin{align*} \label{eq_compact}
	a\cdot2^{4k} + b\cdot2^{3k} + c\cdot2^{2k} + d\cdot2^{k}.
	\end{align*}
	
	For \eqref{easyTerms}, it follows from Observation~\ref{obs:fi}. 	
	For \eqref{case2Terms}, it follows from Observations \ref{obs:fi} and \ref{obs:gi}. 	
	For \eqref{case4Terms} and \eqref{case6Terms}, it follows from Observations \ref{obs:fi}, \ref{obs:gi} and \ref{obs:Sum_hij}.
	Thus, $\crs_2(Q_k,\chi_k)$ can be written as $A\cdot2^{4k} + B\cdot2^{3k} + C\cdot2^{2k} + D\cdot2^{k}$.	
\end{proof}

\section{Special Cases: Convex Position and the Double Chain}\label{app:special_cases}

We consider the ratio  ${\crs_2(D)}/{\crs(D)}$ for particular families of drawings $D$ of $K_n$.

If the vertices of a straight-line drawing $D$ are in convex position then the drawing $D$ is said to be \emph{convex}. 
For a convex straight-line drawing $D$ of $K_n$
the problem of finding a $2$-edge-coloring that minimizes  $\crs_{2}(D)$ 
is equivalent to the problem of finding the $2$-page crossing number of the complete graph $K_n$. 
In~\cite{2-page}, \'Abrego et al. proved that 
the $2$-page crossing number of $K_n$ is equal to
\[\frac{1}{4} \left \lfloor \frac{n}{2} \right \rfloor \left \lfloor \frac{n-1}{2} \right \rfloor \left \lfloor \frac{n-2}{2} \right \rfloor \left \lfloor \frac{n-3}{2} \right\rfloor.\]
And, since the number of crossings in a convex straight-line drawing of $K_n$ is ${n\choose 4}$, we obtain the following theorem.

\begin{theorem}
	If $D$ is a convex straight-line drawing of $K_n$, then ${\crs_2(D)}/{\crs(D)} = {3}/{8}-o(1).$
\end{theorem}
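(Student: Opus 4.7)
The plan is to combine two ingredients that together completely determine both the numerator and the denominator of the ratio: the fact that a convex drawing of $K_n$ realizes the maximum possible rectilinear crossing number, and the formula of \'Abrego et al.\ quoted immediately before the theorem which determines $\crs_2(D)$ exactly.

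First I would establish that $\crs(D) = \binom{n}{4}$. In a convex straight-line drawing every $4$-subset of vertices is in convex position, and each such quadruple contributes exactly one crossing (the unique pair of disjoint edges whose endpoints interleave on the hull). Summing over all $\binom{n}{4}$ quadruples gives the claim.

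Next I would identify $\crs_2(D)$ with the $2$-page crossing number of $K_n$. Thinking of the vertices of $D$ as placed on the spine of a book in their convex-hull order, any $2$-edge-coloring of $D$ corresponds to a distribution of the edges between two pages, and two same-colored edges cross in $D$ if and only if their endpoints interleave cyclically, which is exactly the condition under which they cross on a common page. Hence minimizing the monochromatic crossing number over all $2$-edge-colorings of $D$ equals the $2$-page crossing number of $K_n$. Applying the formula of \'Abrego, Aichholzer, Fern\'andez-Merchant, Ramos, and Salazar quoted above then gives
\[
\crs_2(D) \;=\; \frac{1}{4} \left\lfloor \frac{n}{2} \right\rfloor \left\lfloor \frac{n-1}{2} \right\rfloor \left\lfloor \frac{n-2}{2} \right\rfloor \left\lfloor \frac{n-3}{2} \right\rfloor.
\]

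Finally, to extract the ratio, I would perform a short asymptotic expansion: each floor satisfies $\lfloor (n-k)/2 \rfloor = (n-k)/2 + O(1)$, so the product is $\frac{1}{16}n(n-1)(n-2)(n-3) + O(n^3)$, giving $\crs_2(D) = \frac{1}{64}n(n-1)(n-2)(n-3) + O(n^3)$. Dividing by $\crs(D) = \binom{n}{4} = \frac{1}{24}n(n-1)(n-2)(n-3)$ yields $\crs_2(D)/\crs(D) = 24/64 + O(1/n) = 3/8 - o(1)$, as claimed. The proof is essentially bookkeeping once the two ingredients are in place; the only conceptual step is the equivalence between colorings of a convex drawing and $2$-page drawings, which does not pose a genuine obstacle.
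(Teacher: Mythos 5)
Your proposal is correct and follows exactly the paper's route: identify $\crs_2(D)$ with the $2$-page crossing number of $K_n$ via the correspondence between $2$-edge-colorings of a convex drawing and page assignments, invoke the formula of \'Abrego et al., divide by $\crs(D)=\binom{n}{4}$, and expand asymptotically. The arithmetic ($24/64 = 3/8$) checks out, so there is nothing to add.
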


The other special case we consider consists of non-complete straight-line drawing whose vertices
form a \emph{double-chain}. 
This configuration is defined as follows. 
For $n\geq 3$, an $(n,n)$-{\em double-chain}
consists of two (upper and lower) convex chains of $n$ points each, linearly separable, and facing each other 
so that 
(i) two successive points of one chain and two successive points of the other are always in convex position, and 
(ii) three successive points of one chain and one point of the other are never in convex position.

\begin{theorem}
	Let $D$ be a straight-line drawing of a graph whose vertex set is an $(n,n)$-double-chain, and in which
	there exists an edge between two vertices if and only if they belong to different chains.
	Then ${\crs_2(D)}/{\crs(D)} \le {1}/{3}+o(1).$
\end{theorem}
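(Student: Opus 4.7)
The plan is to determine $\crs(D)$ exactly and then exhibit an explicit $2$-edge-coloring $\chi$ for which $\crs_2(D,\chi)$ is at most $(1/3+o(1))\crs(D)$. Label the upper chain vertices $a_1,\dots,a_n$ and the lower chain vertices $b_1,\dots,b_n$ in left-to-right order along their respective chains. Property~(i) of the double chain says that any pair $\{a_i,a_k\}$ from the upper chain together with any pair $\{b_j,b_l\}$ from the lower chain forms a convex quadrilateral, so its two bipartite diagonals cross. This immediately gives $\crs(D)=\binom{n}{2}^{2}$ and yields the characterization that $(a_i,b_j)$ and $(a_k,b_l)$ cross if and only if $(i-k)(j-l)<0$.

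I would then color $(a_i,b_j)$ red when $i\le j$ and blue when $i>j$, which heuristically separates ``ascending'' edges from ``descending'' ones. To count monochromatic crossings, fix a crossing pair with $i<k$ and $j>l$. For two red edges, the red conditions $i\le j$ and $k\le l$ combined with the crossing inequalities force $i<k\le l<j$, which gives $\binom{n}{4}+\binom{n}{3}$ red-red crossings (coming respectively from $4$-tuples $i<k<l<j$ and degenerate $3$-tuples $i<k=l<j$). For two blue edges, the analogous analysis forces the strict chain $l<j<i<k$ and yields $\binom{n}{4}$ blue-blue crossings. Hence
\[
\frac{\crs_2(D)}{\crs(D)} \;\le\; \frac{2\binom{n}{4}+\binom{n}{3}}{\binom{n}{2}^{2}} \;=\; \frac{(n-2)(n-3)}{3n(n-1)}+O(1/n) \;=\; \frac{1}{3}+o(1),
\]
which is the desired bound.

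The main obstacle is choosing the right coloring. Naive partitions, such as splitting each chain into two halves or alternating colors along a single chain, still produce too many same-color crossings and cannot improve on the probabilistic bound $1/2$. The slope-based rule $i\le j$ vs.\ $i>j$ is effective because, combined with the crossing characterization $(i-k)(j-l)<0$, it forces every monochromatic crossing to come from a $4$-tuple of indices lying in one of only two of the six linear orders on four elements; the remaining four orderings contribute bichromatic crossings, and the monochromatic count is pushed down to $2\binom{n}{4}+O(n^{3})$. The rest of the argument is then routine arithmetic with binomial coefficients.
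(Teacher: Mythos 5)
Your proof is correct and follows essentially the same route as the paper: the identical slope-based coloring (edges $(a_i,b_j)$ colored by the sign of $i-j$, with ties broken arbitrarily) together with the observation that crossings correspond exactly to index quadruples with $(i-k)(j-l)<0$. The only difference is bookkeeping—you enumerate the monochromatic crossings directly as $2\binom{n}{4}+\binom{n}{3}$, whereas the paper counts two monochromatic pairs among the six crossing pairs arising from each index $4$-tuple—and both yield the same $1/3+o(1)$ bound.
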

\begin{proof}
	We label the vertices of the upper chain from left to right as $1, \dots,n$
	and we label the vertices on lower chain from left to right also as $1, \dots, n$. 
	Let $e=(i,j)$ be an edge of $D$,
	with $i$ in the upper chain and $j$ in the lower chain. 
	If $i < j$ then we color $e$ blue; 
	if $i > j$ then we color $e$ red; 
	and if $i=j$ then we color $e$ red or blue. 
	
	Let $I=(i,j,k,l)$ be a tuple of indices with  $1 \le i \le j \le k \le l \le n$, 
	and at most two of them equal. 
	Let $S$ be a set of four vertices of $D$, whose labels are in $\{i,j,k,l\}$, and
	such that two vertices are in the upper chain and the other two are in the lower chain. 
	Note that $S$ defines a unique pair of edges of $D$ that cross; 
	and conversely, every pair of edges
	that cross has
	two vertices in the upper chain and the other two in the lower chain.
	There are six possible choices for $S$ (for a given $I$) and each defines a different pair of crossing edges
	(except when at least two indices are the same). 
	Of these
	six pairs of crossing edges, only two are between edges of the same color. 
	Since the number of possible tuples $(i,j,k,l)$
	in which at most two indices are equal is ${n\choose 4} + O(n^3)$, the result follows.
\end{proof}

\end{document}